\newtheorem{theorem}{Theorem}
\newtheorem{lemma}{Lemma}
\newtheorem{proposition}{Proposition}
\newtheorem{corollary}{Corollary}
\newtheorem{observation}{Observation}
\newtheorem{claim}{Claim}
\newtheorem{fact}{Fact}
\def \remark {\noindent {\bf Remark.} \hskip 5pt}
\begin{document}

\title { Approximation algorithms for TSP with neighborhoods \\
in the plane }
\author {
{\sl Adrian Dumitrescu } 
\thanks {This work was done while the author
was a visiting faculty member at Stony Brook University.}\\
University of Wisconsin--Milwaukee  \\
Milwaukee, WI 53201-0784 \\
{\tt ad@cs.uwm.edu }
\and
{\sl Joseph S. B. Mitchell }
\thanks{Partially supported by grants from HRL Laboratories, NASA Ames, 
the National Science Foundation (CCR-9732220), Northrop-Grumman Corporation, 
Sandia National Labs, and Sun Microsystems.}\\
Stony Brook University \\
Stony Brook, NY 11794-3600 \\
{\tt jsbm@ams.sunysb.edu}
}

\date{August 9, 2014}
\maketitle

\begin{abstract}
  In the Euclidean TSP with neighborhoods (TSPN), we are given a
  collection of $n$ regions ({\em neighborhoods}) and we seek a
  shortest tour that visits each region. As a generalization of the
  classical Euclidean TSP, TSPN is also NP-hard.  In this paper, we
  present new approximation results for the TSPN, including (1) a
  constant-factor approximation algorithm for the case of arbitrary
  connected neighborhoods having comparable diameters; and (2) a PTAS
  for the important special case of disjoint unit disk neighborhoods
  (or nearly disjoint, nearly-unit disks).  Our methods also yield
  improved approximation ratios for various special classes of
  neighborhoods, which have previously been studied.  Further, we give
  a linear-time $O(1)$-approximation algorithm for the case of
  neighborhoods that are (infinite) straight lines.
\end{abstract}

\section { Introduction }

A salesman wants to meet a set of $n$ potential buyers. Each buyer
specifies a connected region in the plane, his {\em neighborhood},
within which he is willing to meet the salesman. For example, the
neighborhoods may be disks centered at buyers' locations, and the
radius of each disk specifies the maximum distance a buyer is willing
to travel to the meeting place. The salesman wants to find a tour of
shortest length that visits all of buyers' neighborhoods and finally
returns to its initial departure point. A variant of the problem,
which we will address in this paper, is that in which no departure
point is specified, and only a tour of the neighborhoods is to be
found.  This problem, which is known as the {\em TSP with
  neighborhoods} (TSPN), is a generalization of the classic Euclidean
Traveling Salesman Problem (TSP), in which the regions, or
``neighborhoods,'' are single points, and consequently is
NP-hard~\cite{gj-cigtn-79,p-etspi-77}.

\paragraph{Related Work.}\quad 
The TSP has a long and rich history of research in combinatorial
optimization.  It has been studied extensively in many forms,
including geometric instances; see
\cite{b-fagts-92,jrr-tsp-95,llrs-tsp-85,m-gspno-00,r-fhlgt-92}.  The
problem is known to be NP-hard, even for points in the Euclidean
plane~\cite{gj-cigtn-79,p-etspi-77}.  It has recently been shown that
the geometric instances of the TSP, including the Euclidean TSP, have a
polynomial-time approximation scheme, as developed by
Arora~\cite{a-nltas-98} and Mitchell~\cite{m-gsaps-99-new}, and later 
improved by Rao and Smith~\cite{rs-98}.

Arkin and Hassin~\cite{ah-aagcs-94} were the first to study approximation
algorithms for the geometric TSPN.  They gave $O(1)$-approximation
algorithms for several special cases, including parallel segments of
equal length, translates of a convex region, translates of a connected
region, and more generally, for regions which have diameter segments
that are {\em parallel} to a common direction, and the ratio between
the longest and the shortest diameter is bounded by a constant.

For the general case of connected polygonal regions, Mata and
Mitchell~\cite{mm-aagtn-95} obtained an $O(\log {n})$-approximation
algorithm, based on ``guillotine rectangular subdivisions'', with time
bound $O(N^5)$, where $N$ is the total complexity of the $n$ regions.
Gudmundsson and Levcopoulos \cite{gl-faatn-99} have recently obtained a
faster method, which, for any fixed $\epsilon>0$, is guaranteed to
perform at least one of the following tasks\footnote{One does not
know in advance which one will be accomplished.}: (1) it outputs a tour
of length at most $O(\log {n}) $ times optimum in time $O(n \log {n}
+N)$; (2) it outputs a tour of length at most $(1+\epsilon)$ times
optimum in time $O(N^3)$.  So far, no polynomial-time approximation
algorithm is known for general connected regions.

Recently it was shown that TSPN is APX-hard and cannot be approximated
within a factor of 1.000374 unless P=NP~\cite{bgklos-tspnv-02,gl-hrtsp-00}.
In fact, the inapproximability factor for the vertex cover problem 
in graphs with degree bounded by 5, as stated in \cite{bk-99},
and on which the result in \cite{gl-hrtsp-00} is based, implies
a factor larger than that.

In the time since this paper first appeared, it has been shown by de
Berg et. al~\cite{bgklos-tspnv-02} that the TSPN has an
$O(1)$-approximation algorithm in the case that the regions are
the regions are connected, disjoint, convex and 
{\em fat}.
Also, Schwartz and Safra~\cite{ss-catsp-02} have improved the lower
bounds of \cite{bgklos-tspnv-02,gl-hrtsp-00} on the hardness of
approximation of several variants of the TSPN problem, and 
Jonsson~\cite{j-tsplp-02} has given an $O(n)$ time 
$\sqrt{2}$-approximation algorithm for
the case of regions that are lines in the plane; our previous 
$O(n)$ time $\pi/2$-approximation algorithm appears in Section~\ref{sec:lines}.

\paragraph{Summary of Our Results.}\quad
In this paper, we obtain several approximation results on the 
geometric TSPN, including:
\begin{description}
\item[(1)]
We extend the approaches initiated in \cite{ah-aagcs-94} and
obtain the first $O(1)$-approximation algorithm for the TSPN having
connected regions of the same or similar diameter. This solves among
others, the open problem posed in \cite{ah-aagcs-94}, to provide a
constant-factor approximation algorithm for TSPN on segments of the
same length and {\em arbitrary} orientation.  
\item[(2)] We give a polynomial-time approximation scheme (PTAS)
  for the case of disjoint unit disks or the case of 
  nearly disjoint disks of nearly the same size.  The algorithm is
  based on applying the $m$-guillotine method with a new area-based
  charging scheme.  The fact that there is a PTAS for the case in
  which the neighborhoods are ``nice,'' with no point lying
in more than a constant number of neighborhoods, 
  should be contrasted with the fact that the TSPN on arbitrary
  regions is APX-hard.  The construction in the proof of
  \cite{gl-hrtsp-00} utilizes ``skinny'' neighborhoods, which
  intersect each other extensively.
\item[(3)] We also give modest improvements on earlier approximation
bounds in \cite{ah-aagcs-94} for the cases of parallel segments of equal
length, translates of a convex region, and translates of a connected
region.
\item[(4)] We present simple algorithms which achieve a
constant-factor guarantee for the case of equal disks and for the case
of infinite straight lines.
\end{description}

\paragraph{Preliminaries.}\quad
The input to our algorithms will be a set ${\cal R}$ of $n$ {\em
regions}, each of which is a simply-connected, closed subset of the
plane, $\Re^2$, bounded by a finite union of arcs of constant-degree
algebraic curves (degenerate regions are simply points).  
Since the regions are assumed to be closed, they
include the points that lie on the curves that form their boundary.
The assumption that regions are {\em simply} connected means that each
region has no ``holes.''  

Ideally, each region is a subset of the plane lying inside a simple,
closed, continuous curve, together with the curve itself. However, we
can only deal with regions that are computer-representable, hence the
above definition. 

Examples of allowable regions include simple
polygons, whose boundaries are unions of a finite number of straight
line segments, circular disks, regions bounded by straight segments
and circular arcs, infinite straight lines, etc.  We let $N$ denote
the total number of arcs specifying all $n$ regions in ${\cal R}$,
i.e. the total combinatorial complexity of the input.

A {\em tour} (or {\em circuit}) $T$, is a closed continuous curve
that visits each region of ${\cal R}$.
The {\em length} of tour $T$, denoted $|T|$, is the Euclidean
length of the curve $T$.  
To avoid ambiguity, the size of a finite set $X$ is denote by $\#X$.
In the {\em TSP with neighborhoods} (TSPN)
problem, our goal is to compute a tour whose length is guaranteed to
be close to the shortest possible length of a tour.  We let $T^*$
denote any optimal tour and let $L^*=|T^*|$ denote its length.  An
algorithm that outputs a tour whose length is guaranteed to be at most
$c\cdot L^*$ is said to be a {\em $c$-approximation algorithm} and to
have an {\em approximation ratio} of $c$.  A family of
$(1+\epsilon)$-approximation algorithms, parameterized by
$\epsilon>0$, and each running in polynomial time for fixed
$\epsilon$, is said to be a {\em polynomial-time approximation
scheme} (PTAS). 

\paragraph{Outline of the Paper.}\quad
In Section~\ref{sec:equal-disks} we use some simple packing arguments
to yield approximation algorithms for the TSPN for equal-size disks.
Section~\ref{sec:ptas} presents a PTAS for the TSPN for equal-size
disks.  In Section~\ref{sec:same-diameter} we give an approximation
algorithm for the TSPN for regions having the same diameter.  Finally,
in Section~\ref{sec:lines}, we give an approximation algorithm for the
case of regions that are infinite straight lines.  We conclude 
with a short list of open problems for future research.

\section {Equal-Size Disks}
\label{sec:equal-disks}

We begin by giving some simple arguments and corresponding algorithms
that achieve a constant approximation ratio for TSPN on a set ${\cal
  R}$ of $n$ disks of the same size.  Without loss of generality, we
assume that all disks have unit radius.  Our results carry over
naturally to disks of nearly the same size, with corresponding changes
in the approximation factor.

First, we consider the case of disjoint unit disks.  The algorithm is
simple and natural: using known PTAS results for TSP on
points~\cite{a-nltas-98,m-gsaps-99-new,rs-98}, compute, in time $O(n\log n)$, a
$(1+\epsilon)$-approximate tour, $T=T_C$, of the center points of the
$n$ disks.  We refer to it as the {\em center tour}. 
Clearly $T_C$ is a valid region tour.  We claim that

\begin{proposition}
Given set of $n$ disjoint unit disks, 
one can compute a tour, $T$, whose length satisfies
$$ |T| \leq ((1+\frac{8}{\pi})|T^*|+8)(1+\epsilon), $$
where $T^*$ is an optimal tour. The running time is dominated by
that of computing a $(1+\epsilon)$-approximate tour of $n$ points.
\end{proposition}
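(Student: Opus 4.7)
The proposed algorithm computes a $(1+\epsilon)$-approximate Euclidean TSP tour $T_C$ on the $n$ disk centers. Since each center lies inside its own disk, $T_C$ is a valid region tour, so the only task is to bound $|T_C|$ in terms of $|T^*|$. I will do this through two independent estimates and then combine them.

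First, I plan to convert $T^*$ into a tour visiting all $n$ centers. Since $T^*$ must meet each disk $D_i$, it contains some point $p_i \in D_i$, so $\|p_i - c_i\| \le 1$. Splicing in a back-and-forth detour of length $2\|p_i - c_i\| \le 2$ at each such $p_i$ produces a closed tour that visits every center and has total length at most $|T^*| + 2n$. Hence the optimal TSP tour on the $n$ centers, call it $T_C^*$, satisfies $|T_C^*| \le |T^*| + 2n$.

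Second, I plan to bound $n$ in terms of $|T^*|$ by a packing/area argument. Let $T^* \oplus D_r$ denote the Minkowski sum of $T^*$ with a disk of radius $r$; the standard tubular-neighborhood estimate for a rectifiable closed curve gives
$$ \mathrm{area}(T^* \oplus D_r) \le 2r\,|T^*| + \pi r^2. $$
Taking $r=2$ and using $p_i \in T^*$ with $\|p_i - c_i\| \le 1$, each unit disk $B(c_i,1)$ is contained in $B(p_i,2) \subseteq T^* \oplus D_2$. Since the unit disks are pairwise disjoint by hypothesis, comparing areas yields
$$ n\pi \;\le\; 4|T^*| + 4\pi, \qquad \text{i.e.,} \qquad 2n \;\le\; \frac{8}{\pi}\,|T^*| + 8. $$

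Combining the two estimates gives $|T_C^*| \le (1 + 8/\pi)\,|T^*| + 8$, and the $(1+\epsilon)$-approximation for Euclidean TSP on points yields $|T_C| \le (1+\epsilon)\,|T_C^*|$, which is exactly the claimed bound; the running time is dominated by that of the TSP PTAS on $n$ points. The only delicate step is the packing argument: one must apply the Minkowski-sum area bound to the closed tour and pick the radius carefully. The choice $r=2$ (accounting for one unit of disk radius plus one unit of detour slack into each center) is what makes the constants $1+8/\pi$ and $8$ come out as stated, so getting that scaling right is the main thing to watch.
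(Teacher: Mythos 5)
Your proposal is correct and follows essentially the same route as the paper: the Minkowski sum $T^*\oplus D_2$ is exactly the paper's ``area swept by a disk of radius $2$ whose center moves along $T^*$,'' yielding the same bound $\pi n\leq 4L^*+4\pi$, and the detour-of-length-$2$ argument for the center tour is identical. No gaps.
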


\begin{proof}
Put $L^*=|T^*|$.
Since $T^*$ visits all disks, the area $A_{L^*}$ swept by
a disk of radius 2, whose center moves along $T^*$, covers 
all of the unit disks ${\cal R}$. This area is bounded as follows
$$ \pi n \leq A_{L^*} \leq 4L^*+4\pi.$$
Thus, $n \leq 4 +\frac{4L^*}{\pi}$. 
A center tour of length at most 
$$ L^*+2n \leq L^*+2(4 +\frac{4L^*}{\pi})= (1+\frac{8}{\pi})L^*+8 $$ 
can be obtained by going along $T^*$ and making a detour of length at
most 2 to visit the center of a disk at each point where $T^*$ first visits a
disk. Hence, the length of the computed tour is bounded as claimed.
\end{proof}

For large $n$, the approximation ratio above is
$(1+\frac{8}{\pi})(1+\epsilon) \leq 3.55$; for small constant values
of $n$, the problem can be solved exactly using brute force.  We note
that for any algorithm that outputs a tour on the center points, we
cannot expect an approximation ratio smaller than $2$.  To see this,
consider a large square, and place almost touching unit disks along
its perimeter, both on its inside and on its outside. All of the disks
touch the perimeter, which is also an optimal disk tour except at the
four corners of the square. The length of the disk center tour is
roughly two times the perimeter of the square.

When the disks are nearly of the same size, so that the ratio
between the maximum and minimum radius is bounded by a constant $k>1$,
for large $n$, the approximation ratio is about $1+\frac{8 k^2}{\pi}$. 

Next we consider the case in which the disks can overlap.  First, we
compute a maximal independent (pairwise-disjoint) set $I$ of
disks. Next, we compute $T_I$, a $(1+\epsilon)$-approximate tour of
the center points of disks in $I$. Finally, we output a tour $T$
obtained by following the tour $T_I$, taking detours
around the boundaries of each of the
disks in $I$, as illustrated in
Figure~\ref{udisk}. More specifically, we select an arbitrary
disk $D_0$ and one of the intersection points, $s$, between $D_0$ and
$T_I$. We start at point $s$ and go clockwise along $T_I$.  Whenever the
boundary of a disk of $I$ is encountered, we follow clockwise around the boundary of the disk until we encounter again the tour $T_I$.  When we finally
reach $s$, we continue clockwise around $D_0$ to $T_I$ and then continue
counterclockwise around $T_I$, again taking detours clockwise around the
disks of $I$ that we encounter along the way.  
The tour $T$ finally ends when we return the second time to $s$.
In this way, our tour $T$ traverses the boundary of each disk of $I$ exactly
once and therefore visits all of the disks that are not in $I$ as
well. We remark that this method of constructing a feasible tour of the disks
results in a slightly better worst-case ratio than another
natural strategy for extending $T_I$ to a full disk tour: While traversing
$T_I$, each time one encounters the boundary of a disk $D \in I$, traverse the
entire circumference of $D$ exactly once, and then go directly
to the point where the tour $T_I$ exits $D$ and continue along $T_I$.

\begin{figure}[htbp]
\centerline {\input{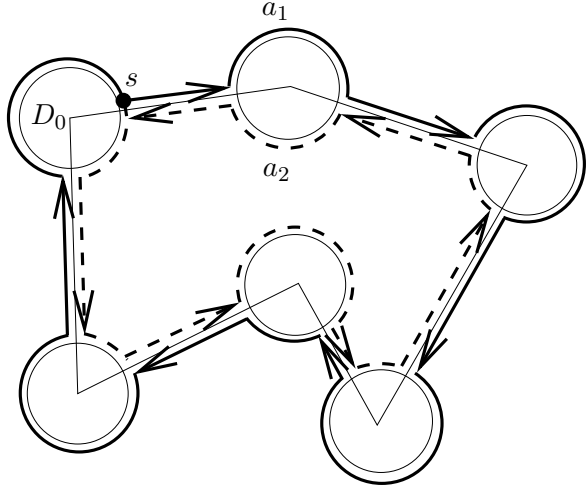}}
\caption{Construction of the tour $T$ from the
  tour $T_I$ on the center points of a maximal independent set, $I$,
  of disks.  Starting from $s$, we follow the thick solid tour, back
  to $s$, and then follow the thick dashed tour back around to $s$
  again.}
\label{udisk}
\end{figure}

Denote by $T^*$ an optimal disk tour, of length $L^*$, and by $T^*_I$
an optimal disk tour of the subset $I\subseteq {\cal R}$ of disks.  A
constant approximation ratio can be derived from the following three
inequalities:
$$ |T| \leq \pi|T_I| +2\pi ,   \eqno(1) \label{E1} $$
$$ |T_I| \leq   \left((1+\frac{8}{\pi}\right) |T^*_I|+8) (1+\epsilon) ,  
\eqno(2) \label{E2} $$
$$ |T^*_I| \leq L^* .   \eqno(3) \label{E3} $$
The third inequality follows from the fact that $I \subseteq {\cal R}$,
and the second from the case of disjoint unit disks considered above.
To check the first inequality, decompose $T_I$ into $\#I$ parts, 
assuming $\#I \geq 2$, one for each disk in $I$, by cutting each segment 
between two consecutive centers in the middle. Let $x+y=d \geq 2$ be the 
length of one of these parts of tour $T_I$, corresponding to a disk $D$, 
where $x,y$ are the lengths of the two segments of $T_I$ adjacent to 
the center of $D$. 
Write $a_1,a_2$ for the arc lengths of $D$ when its boundary is traversed
by $T$; we have $a_1+a_2=2\pi$. Writing the ratio
of the length of the corresponding part of $T$ to the length, $d$, of
this part of $T_I$, we get
$$ \frac{(x-1+a_1+y-1)+(x-1+a_2+y-1)}{d}=
\frac{2(d+\pi-2)}{d} \leq \pi, $$
the maximum being attained when $d=2$. When $\#I=1$, we have 
$|T_I|=0, |T|=2\pi$. Thus (1) is satisfied in all cases.
Putting (1), (2), (3) together, we get

\begin{proposition}
Given set of $n$ unit disks, possibly overlapping, 
one can compute a tour, $T$, whose length satisfies
$$ |T| \leq ((\pi+8)|T^*|+8\pi)(1+\epsilon)+2\pi, $$
where $T^*$ is an optimal tour. The running time is dominated by
that of computing a $(1+\epsilon)$-approximate tour of $n$ points.
\end{proposition}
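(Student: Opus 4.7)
The plan is to combine the three inequalities $(1)$, $(2)$, $(3)$ that have just been established in the discussion above. All of the genuinely algorithmic and geometric work has already been done: the tour $T$ was constructed from the PTAS-approximate tour $T_I$ on the centers of a maximal independent set $I$ by stitching in clockwise detours along the boundary of each disk in $I$, and the three inequalities bound $|T|$ in terms of $|T_I|$, then $|T_I|$ in terms of $|T_I^*|$, and finally $|T_I^*|$ in terms of $L^* = |T^*|$. Consequently, the proposition itself reduces to a short substitution.

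Concretely, I would first plug $(3)$ into $(2)$ to obtain $|T_I| \leq ((1 + 8/\pi)L^* + 8)(1+\epsilon)$, and then feed this into $(1)$ to get $|T| \leq \pi \bigl((1 + 8/\pi)L^* + 8\bigr)(1+\epsilon) + 2\pi$. Distributing the leading $\pi$ through the inner expression gives $\pi(1 + 8/\pi) = \pi + 8$ and $\pi \cdot 8 = 8\pi$, so the bound collapses to $((\pi+8)L^* + 8\pi)(1+\epsilon) + 2\pi$, matching the statement exactly.

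For the running-time claim, the dominant cost is the call to the Euclidean TSP PTAS on the $\#I \leq n$ center points of $I$; a maximal independent set of the unit disks can be built greedily in polynomial time, and the stitching in of boundary detours adds only $O(n)$ overhead since each disk of $I$ contributes a constant amount of extra description. There is no real obstacle in the proof: the conceptual content already sits in inequality $(1)$ (the geometric observation, verified via the $2(d+\pi-2)/d \leq \pi$ calculation, that replacing straight-line segments of $T_I$ by boundary arcs of unit disks costs at most a factor of $\pi$) and in inequality $(2)$ (an application of the disjoint-disks proposition to the set $I$), and all that remains is to record that the chain of inequalities composes to the advertised bound.
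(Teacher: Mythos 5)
Your proposal is correct and follows exactly the paper's own argument: the proposition is obtained by chaining inequalities (3) into (2) into (1), and the arithmetic $\pi(1+8/\pi)=\pi+8$, $\pi\cdot 8=8\pi$ yields the stated bound. The remarks on the running time likewise match the paper's claim that the PTAS call on the center points dominates.
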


For large $n$, the approximation ratio is $(\pi+8)(1+\epsilon) \leq 11.15$.
We note that the approximation ratio we have obtained with this
approach for disjoint unit disks, $3.55$, (resp., $11.15$ for unit
disks) is better (resp., weaker) than $\sqrt{3^2+7^2} \approx 7.62$,
the approximation ratio that will be given at the end of
Section~\ref{sec:same-diameter} for translates of a convex region,
which applies, of course, to the case of unit disks.

\section{A PTAS for Disjoint Equal Disks}
\label{sec:ptas}

In this section we present a polynomial-time approximation scheme
for the TSPN in the case of regions that are equal-size disks
or nearly equal-size disks.

Given the powerful methods that have been developed to obtain PTAS's
for various geometric optimization problems, such as the Euclidean
TSP, it is natural to suspect that these same techniques may apply to
the TSPN.  Indeed, one may expect that TSPN should have a PTAS based
on applying existing methods.  However, we know now, from the recent
APX-hardness result of \cite{gl-hrtsp-00}, that this cannot be.
What goes wrong?

The basic issue we must address in order to apply these techniques is
to be able to write a recursion to solve an appropriate ``succinct''
subproblem with dynamic programming.  What is the subproblem
``responsible'' for solving?  For problems involving {\em points}, the
subproblem can be made responsible for constructing some kind of
inexpensive network on the points {\em inside} the subproblem defined
by a rectangle, and to interconnect this network with the boundary in
some nicely controlled way, e.g., with only a constant complexity of
connection, in the case of $m$-guillotine methods.  The problem with
{\em regions} is that they can cross subproblem boundaries.  Then, we
do not know if the subproblem is responsible to visit the region, or
if the region is visited {\em outside} the subproblem.  We cannot
afford to enumerate the subset of regions that cross the boundary for
which the subproblem is responsible -- there are too many such
subsets, leading to too many subproblems.  Thus, we need a new idea.

Our approach is to employ a new type of structural result, based on
the general method of $m$-guillotine subdivisions.  In particular, we
show how to transform an optimal tour into one of a special class of
tours that recursively has a special $m$-guillotine structure,
permitting us to have a succinct, constant-size specification of the
subset of regions, crossing the subproblem boundary, for which the
subproblem is ``responsible'' in that it must visit these regions on
its interior.  In order to bound the increase in tour length in
performing this transformation, we must ``charge off'' the added tour
length to some small fraction of the length of the optimal tour, just
as is done in proving the bounds for the $m$-guillotine PTAS method
for TSP.  In order to do this charging, we must assume some special
structure on the class of neighborhoods in the TSPN, e.g., that the
regions ${\cal D}=\{D_1,\ldots,D_n\}$ are pairwise-disjoint,
equal-size disks, or have a similar
structure allowing us to relate tour length to area.

Here, we show how the approach applies to disjoint disks having equal
radii, $\delta$; generalizations are readily made to the case of
nearly equal radii, with a constant upper bound on the ratio of radii, 
and to the case of ``modestly overlapping,'' in which the disks become
pairwise-disjoint if all of them are decreased in size by a constant
factor while keeping the center points the same.

We begin with some definitions, largely following the notation of
\cite{m-gsaps-99-new}.  Let $G$ be an embedding of a planar graph, and
let $L$ denote the total Euclidean length of its edges, $E$. We can
assume without loss of generality that $G$ is restricted to the unit
square, $B$; i.e., $E\subset int(B)$.

Consider an axis-aligned rectangle $W$, a {\em window}, with
$W\subseteq B$.  Rectangle $W$ will correspond to a subproblem in a dynamic
programming algorithm.  Let $\ell$ be an axis-parallel line
intersecting $W$. We refer to $\ell$ as a {\em cut} and assume,
without loss of generality, that $\ell$ is vertical.

The intersection, $\ell\cap (E\cap int(W))$, of a cut $\ell$ with
$E\cap int(W)$ consists of a possibly empty set of subsegments of
$\ell$.  These subsegments are possibly singleton points.  Let $\xi$
be the number of endpoints of such subsegments along $\ell$, and let
the points be denoted by $p_1,\ldots,p_{\xi}$, in order of decreasing
$y$-coordinate along~$\ell$.  For a positive integer $m$, we define
the {\em $m$-span}, $\sigma_m(\ell)$, of $\ell$ with respect to $W$ as
follows.  If $\xi\leq 2(m-1)$, then $\sigma_m(\ell)=\emptyset$;
otherwise, $\sigma_m(\ell)$ is defined to be the line segment,
$p_{m}p_{\xi-m+1}$, joining the $m$th endpoint, $p_m$, with the
$m$th-from-the-last endpoints, $p_{\xi-m+1}$.  Refer to
Figure~\ref{fig:span}.  Note that the segment $p_{m}p_{\xi-m+1}$ may
be of zero length in case $p_{m}=p_{\xi-m+1}$.

\begin{figure}[htbp]
\centerline {\input{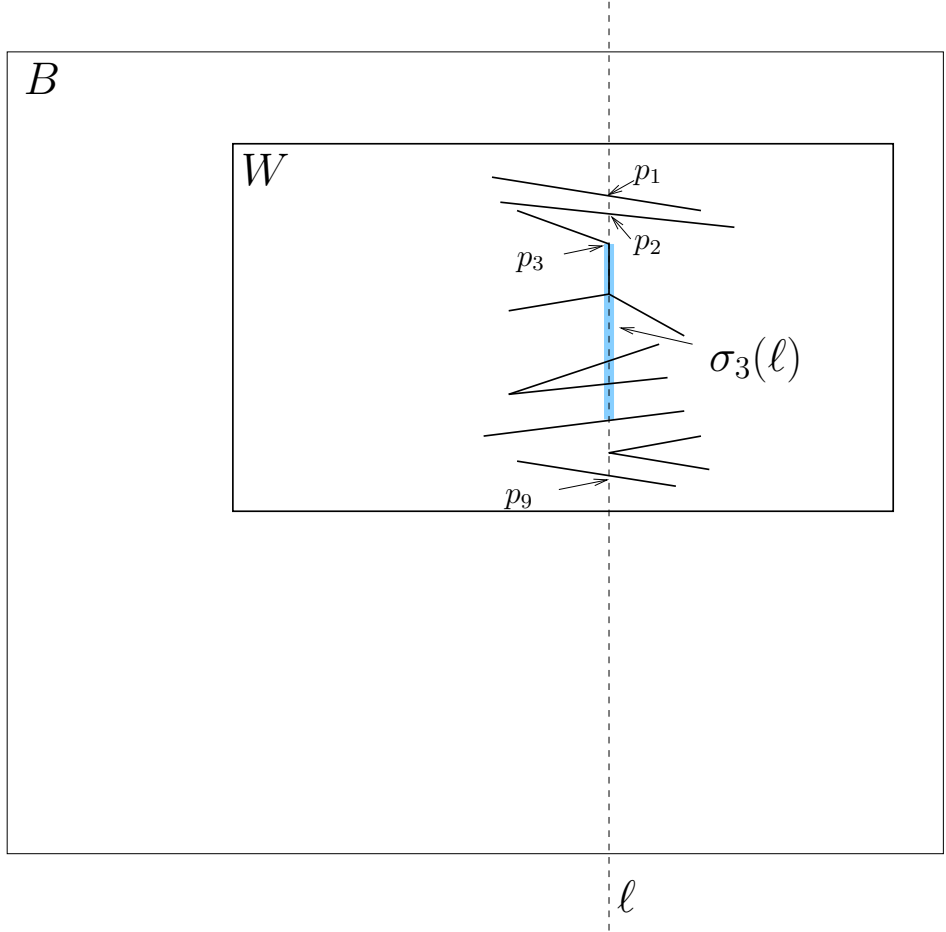}}
\caption{
  Definition of $m$-span: Here, the 3-span, $\sigma_3(\ell)$ of $\ell$
  with respect to the window $W\subset B$ is highlighted with a thick
  shaded vertical segment.}
\label{fig:span}
\end{figure}

\begin{figure}[htbp]
\centerline {\input{span-disk.pstex_t}}
\caption{
  Definition of $m$-disk-span: Here, the 3-disk-span, $\sigma_{3,{\cal
      D}}(\ell)$ of $\ell$ with respect to the window $W\subset B$ is
  highlighted with a thick shaded vertical segment.}
\label{fig:span-disk}
\end{figure}

The intersection, $\ell\cap {\cal D}\cap W$ of $\ell$ with the disks
that intersect $W$ consists of a possibly empty set of $\xi_{\cal
  D}\leq |{\cal D}\cap W|$ subsegments of $\ell$, one for each disk
that is intersected by $\ell$ in $W$.  Let these disks be
$D_1,D_2,\ldots,D_{\xi_{\cal D}}$, in order of decreasing
$y$-coordinate.  For a positive integer $m$, we define the {\em
  $m$-disk-span}, $\sigma_{m,{\cal D}}(\ell)$, of $\ell$ with respect
to $W$ as follows.  If $\xi_{\cal D}\leq 2m$, then $\sigma_{m,{\cal
    D}}(\ell)=\emptyset$; otherwise, $\sigma_{m,{\cal D}}(\ell)$ is
defined to be the possibly zero-length line segment joining the bottom
endpoint of $\ell\cap D_{m}$ with the top endpoint of $\ell\cap
D_{\xi_{\cal D}-m+1}$.  Refer to Figure~\ref{fig:span-disk}.

Line $\ell$ is an {\em $m$-good cut with respect to $W$} if
$\sigma_m(\ell)\subseteq E$ and $\sigma_{m,{\cal D}}(\ell)\subseteq
E$.  In particular, if $\xi\leq 2(m-1)$ and $\xi_{\cal D}\leq 2m$,
then $\ell$ is trivially an $m$-good cut, since both the $m$-span
and the $m$-disk-span are empty in this case.

We now say that $E$ satisfies the {\em $m$-guillotine property with
respect to window $W$} if either (1) $W$ does not fully contain any
disk; or (2) there exists an $m$-good cut, $\ell$, that splits $W$
into $W_1$ and $W_2$, and, recursively, $E$ satisfies the
$m$-guillotine property with respect to both $W_1$ and~$W_2$.

\begin{theorem}
\label{thm:main-guil}
Let $G$ be an embedded connected planar graph, with edge set $E$, of
total length $L$, and let ${\cal D}$ be a given set of
pairwise-disjoint equal-radius disks (of radius $\delta$) each of
which intersects $E$.  Assume that $E$ and ${\cal D}$ are contained in
the unit square $B$.  Then, for any positive integer $m$, there exists
a planar graph $G'$ that satisfies the $m$-guillotine property with
respect to $B$ and has an edge set $E'\supseteq E$ of length
$$L'\leq \left(1+{\sqrt{2}+16/\pi \over m}\right)L + {16\delta\over m}.$$
\end{theorem}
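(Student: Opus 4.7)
I would follow the standard proof template for $m$-guillotine subdivision theorems introduced by Mitchell~\cite{m-gsaps-99-new}, extended with a new area-based charge that handles the $m$-disk-span condition. The plan is to proceed by induction on a complexity measure of the window $W$ (say, the number of disks of $\mathcal{D}$ fully contained in $W$, breaking ties by the edge length of $E\cap W$), starting with $W=B$. In the base case, $W$ contains no disk of $\mathcal{D}$ fully, so condition~(1) of the $m$-guillotine property holds trivially and there is nothing to add. In the inductive step, the goal is to exhibit an $m$-good cut $\ell$ of $W$ for which the added length $a(\ell):=|\sigma_m(\ell)|+|\sigma_{m,\mathcal{D}}(\ell)|$ is small enough that, after inserting $\sigma_m(\ell)$ and $\sigma_{m,\mathcal{D}}(\ell)$ into $E'$ and recursing on the two subwindows $W_1,W_2$, the bound in the statement telescopes out.

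To locate the cut, I would set up a global charging scheme, integrate $a(\ell)$ over all horizontal and all vertical cuts of $W$, and use an averaging argument to find one for which $a(\ell)$ is at most a $1/m$-fraction of a suitable potential. The two parts of $a(\ell)$ decouple: $|\sigma_m(\ell)|$ is charged against edge length, $|\sigma_{m,\mathcal{D}}(\ell)|$ against disk area. The edge-span analysis is exactly Mitchell's: a point of $E$ lying on an edge of slope $\theta$ contributes $(\sin\theta+\cos\theta)\,|dp|\le\sqrt{2}\,|dp|$ to the combined measure of axis-parallel cuts that cross it, and it can only appear in an $m$-span when fewer than $m$ crossings lie on the same side of the cut; averaging across the recursion shows that the total $m$-span length added is at most $(\sqrt{2}/m)\,L$, accounting for the $\sqrt{2}/m$ summand.

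The new ingredient is the charge for the $m$-disk-span. The key observation is that a point $(x,y)\in B$ lies in $\sigma_{m,\mathcal{D}}$ of the vertical cut through $x$ only if at least $m$ disks meet that cut with chord strictly above $y$ and at least $m$ do so strictly below, and symmetrically for horizontal cuts. Using pairwise-disjointness and uniform radius $\delta$, one has $\int(\#\text{disks crossed by }\ell)\,d\ell=2\delta\,n_W$ over either the horizontal or vertical sweep, where $n_W$ counts disks fully inside $W$; hence the measure of cut lines pierced by more than $2m$ disks is at most $O(\delta\,n_W/m)$, and each such cut contributes $|\sigma_{m,\mathcal{D}}(\ell)|\le 2\delta$ per ``middle'' disk. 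Because every disk of $\mathcal{D}$ meets $E$ by hypothesis, the Minkowski-sausage estimate $n_W\pi\delta^2\le 2\delta L+\pi\delta^2$ converts $n_W$ back to a bound in terms of $L$, giving the $16/(\pi m)$ multiplicative coefficient of $L$ together with the additive $16\delta/m$ slack.

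I expect the main obstacle to be the bookkeeping: I must combine the two charging schemes into a single potential that telescopes cleanly over all recursion levels, and verify that once $\sigma_m(\ell)$ and $\sigma_{m,\mathcal{D}}(\ell)$ have been added to $E'$, they do not re-enter the charging and inflate it on the subproblems $W_1,W_2$. This is the usual issue in $m$-guillotine proofs and is handled by Mitchell's trick of excluding the newly added guillotine segments from their own $m$-span computation, since they are already guillotine with respect to the chosen cut. Extracting the precise constants $\sqrt{2}+16/\pi$ and $16\delta$, rather than merely $O(1/m)$ and $O(\delta/m)$, will require a careful accounting, but the disjointness and equal-radius assumptions on $\mathcal{D}$ make the area-based bound tight.
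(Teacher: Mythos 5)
Your outline reproduces the paper's proof in its broad strokes: a recursive construction driven by an integral/averaging argument over axis-parallel cuts, with the $m$-span charged against edge length (yielding the $\sqrt{2}L/m$ term exactly as in Mitchell's point-TSP argument) and the $m$-disk-span charged against the disks, converted back to tour length via the packing bound $n\le \frac{4}{\pi\delta}L+4$. But the specific mechanism you propose for the disk term does not work. You bound the measure of cut lines pierced by more than $2m$ disks by $O(\delta n_W/m)$ and then assert that each such cut contributes $|\sigma_{m,\mathcal{D}}(\ell)|\le 2\delta$ per ``middle'' disk. That estimate is false: the $m$-disk-span runs from the $m$th disk chord from one end to the $m$th from the other and includes all of the \emph{gaps} between consecutive disks along the cut. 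With exactly $2m+1$ well-separated disks on a cut there is a single middle disk, yet the span can be nearly the full height of the window, which is not bounded in terms of $\delta$ at all. The paper's fix is a genuine charging scheme (its ``blue'' charge): what gets charged is the length of the $m$-disk-dark portion of a \emph{favorable} cut --- a cut whose $m$-dark plus $m$-disk-dark length is at least the combined span cost, whose existence is exactly what the averaging lemma establishes --- and that length is charged to the boundaries of the first $m$ blocking disks on each side at rate $\frac{1}{2m}$ per level. Since the charge a disk receives is confined to the strip it spans, each disk absorbs at most $\frac{1}{2m}\cdot 8\delta$, giving $\frac{4\delta}{m}n\le \frac{16}{\pi m}L+\frac{16\delta}{m}$.

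A second, related gap is the telescoping you defer to the end. A per-window bound of the form $O(\delta n_W/m)$ cannot simply be summed over the recursion tree: the same disks reappear in every nested window containing them and the recursion depth is unbounded, so the naive sum is (depth)$\times O(\delta n/m)$. The ingredient that makes the total independent of the depth is the ``charged at most once from each side in each direction'' property: once a disk has been charged from above due to some cut, it lies within $m$ disk-levels of the boundary of every subsequent window and can never again be among the first $m$ blockers from that side. You invoke the analogous property for edges but never state it for disks, and without it your potential does not close. The obstacle you do flag --- newly added guillotine segments re-entering the charge --- is real but secondary (they lie on window boundaries and never render a point of a future cut $m$-dark), and your radius-$\delta$ sausage constant differs from the paper's radius-$2\delta$ computation, but those are minor compared with the two points above.
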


\begin{proof} 
  We convert $G$ into a new graph $G'$ by adding to $E$ a new set of
  horizontal/vertical edges whose total length is at most
  ${\sqrt{2}+16/\pi \over m}L + {16\delta\over m}$.  The construction
  is recursive: at each stage, we show that there exists a cut,
  $\ell$, with respect to the current window $W$ (which initially is
  the unit square $B$), such that we can afford to add both the
  $m$-span and the $m$-disk-span to~$E$.
  
  We say that a point $p$ on a cut $\ell$ is {\em $m$-dark} {\em with
    respect to $\ell$ and $W$} if, along $\ell^{\perp}\cap int(W)$,
  there are at least $m$ edges of $E$ intersected by $\ell^{\perp}$ on
  each side of $p$, where $\ell^{\perp}$ is the line through $p$ and
  perpendicular to $\ell$.\footnote{We can think of the edges $E$ as
    being ``walls'' that are not very effective at blocking light ---
    light can go through $m-1$ walls, but is stopped when it hits the
    $m$th wall; then, $p$ on a line $\ell$ is $m$-dark if $p$ is not
    illuminated when light is shone in from the boundary of $W$, along
    the direction of~$\ell^{\perp}$.}  We say that a subsegment of
  $\ell$ is {\em $m$-dark} (with respect to $W$) if all points of the
  segment are $m$-dark with respect to $\ell$ and~$W$.
  
  The important property of $m$-dark points along $\ell$ is the
  following: Assume, without loss of generality, that $\ell$ is
  horizontal.  We consider any line segment that lies along an edge of
  $E$ to have a {\em top} side and a {\em bottom} side; the top is the
  side that can be seen from above, from a point with $y=+\infty$.
  Then, if all points on subsegment $pq$ of $\ell$ are $m$-dark, we
  can charge the length of $pq$ off to the bottoms of the first $m$
  subsegments, $E^+\subseteq E$, of edges that lie above $pq$, and the
  tops of the first $m$ subsegments, $E^-\subseteq E$, of edges that
  lie below $pq$, since we know that there are at least $m$ edges
  ``blocking'' $pq$ from the top/bottom of $W$.  We charge $pq$'s
  length half to $E^+$, charging each of the $m$ levels of $E^+$ {\em
    from below}, with ${1\over 2m}$ units of charge, and half to
  $E^-$, charging each of the $m$ levels of $E^-$ {\em from above},
  with ${1\over 2m}$ units of charge.  We refer to this type of charge
  as the ``red'' charge.
  
  We say that a point $p$ on a cut $\ell$ is {\em $m$-disk-dark} {\em
    with respect to $\ell$ and $W$} if, along $\ell^{\perp}\cap
  int(W)$, there are at least $m$ disks of ${\cal D}$ that have a
  nonempty intersection with $\ell^{\perp}\cap W$ on each side of $p$.
  Here, again, $\ell^{\perp}$ is the line through $p$ and
  perpendicular to $\ell$.  We say that a subsegment of $\ell$ is {\em
    $m$-disk-dark} with respect to $W$ if all points of the segment
  are $m$-disk-dark with respect to $\ell$ and~$W$.  The {\em
    chargeable} length within $W$ of a cut $\ell$ is defined to be the
  sum of the lengths of its $m$-dark portion and its $m$-disk-dark
  portion.  Refer to Figure~\ref{fig:dark-disk}.
  
\begin{figure}[htbp]
\centerline {\input{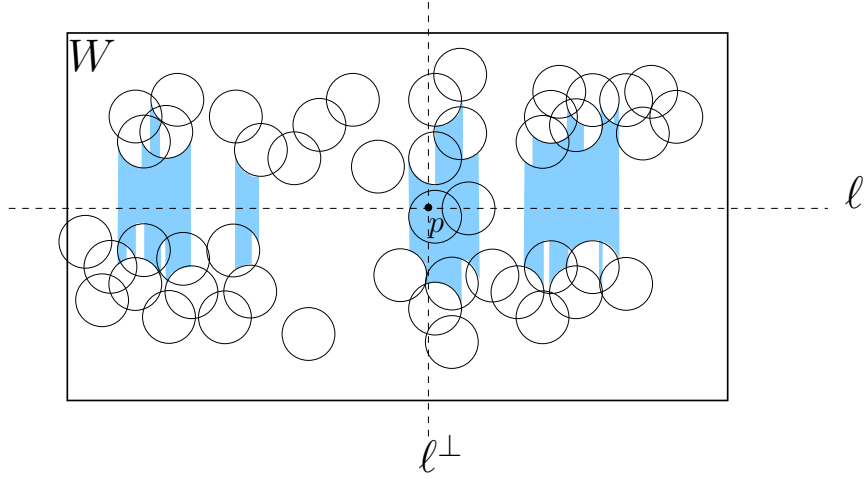}}
\caption{
  Definition of $m$-disk-dark: Here, the points that are 2-disk-dark
  with respect to $\ell$ are those four subsegments of $\ell$ that lie
  within the shaded regions, which comprise the set $R_x^{(2,{\cal
      D})}$ of points of $W$ that are 2-disk-dark with respect to
  horizontal cuts.}
\label{fig:dark-disk}
\end{figure}

  The important property of $m$-disk-dark points along horizontal
  $\ell$ is the following: If all points on subsegment $pq$ of $\ell$
  are $m$-disk-dark, then we can charge the length of $pq$ off to the
  bottoms of the first $m$ disks that lie above $pq$, and the tops of
  the first $m$ disks that lie below $pq$, since we know that there
  are at least $m$ disks ``blocking'' $pq$ from the top/bottom of
  $W$.  We charge $pq$'s length half upwards, charging the bottoms of
  each of the $m$ ``levels'' of disks with ${1\over 2m}$ units of
  charge, and half downwards, charging the tops of each of the $m$
  ``levels'' of disks with ${1\over 2m}$ units of charge.  We refer
  to this type of charge as the ``blue'' charge.
  
  A cut $\ell$ is {\em favorable} if its chargeable length within $W$
  is {\em at least as large as} the sum of the lengths of the $m$-span
  and the $m$-disk-span.  We do not attempt to take advantage of the
  fact that there may be overlap among the $m$-span and the
  $m$-disk-span, nor that there may be portions of these two segments
  that are already part of $E$; an argument taking advantage of these
  facts may improve slightly the constants in some of our bounds.
  
  The existence of a favorable cut is guaranteed by the following key
  lemma, whose proof is similar to that of the key lemma
  in~\cite{m-gsaps-99-new}:

\begin{lemma}
\label{lem:key}
For any $G$ and any window $W$, there is a favorable cut.
\end{lemma}

\begin{proof}
We show that there must be a favorable cut that
is either horizontal or vertical.

Let $f(x)$ denote the ``cost'' of
the vertical line, $\ell_x$, passing through the point $(x,0)$,
where ``cost'' means the sum of the lengths of the
$m$-span and the $m$-disk-span for $\ell_x$.
Then,
$$f(x)=|\sigma_m(\ell_x)| + |\sigma_{m,{\cal D}}(\ell_x)|.$$
Thus, 
$$A_x=\int_0^1 f(x) dx = A^{(m)}_x+A^{(m,{\cal D})}_x= \int_0^1
|\sigma_m(\ell_x)| dx + \int_0^1 |\sigma_{m,{\cal D}}(\ell_x)| dx,$$ 
where $A^{(m)}_x=\int_0^1 |\sigma_m(\ell_x)| dx$ is the area of the
$x$-monotone region $R^{(m)}_x$ of points of $B$ that are $m$-dark
with respect to horizontal cuts, and $A^{(m,{\cal D})}_x=\int_0^1
|\sigma_{m,{\cal D}}(\ell_x)| dx$ is the area of the $x$-monotone
region $R^{(m,{\cal D})}_x$ of points of $B$ that are $m$-disk-dark
with respect to horizontal cuts.  
Refer to Figure~\ref{fig:dark-disk}. Similarly, define $g(y)$ to be the
cost of the horizontal line through $y$, and let $A_y=\int_0^1 g(y)
dy$.

Assume, without loss of generality, that $A_x\geq A_y$.  We claim that
there exists a horizontal favorable cut; i.e., we claim that there
exists a horizontal cut, $\ell$, such that its chargeable length is at
least as large as the cost of $\ell$, meaning that the length of its
$m$-dark portion plus its $m$-disk-dark portion is at least
$|\sigma_m(\ell)|+|\sigma_{m,{\cal D}}(\ell)|$.  To see this, note
that $A_x$ can be computed by switching the order of integration,
``slicing'' the regions $R^{(m)}_x$ and $R^{(m,{\cal D})}_x$
horizontally, rather than vertically; i.e., $A_x=\int_0^1 h(y)
dy=\int_0^1 h_m(y) dy+\int_0^1 h_{m,{\cal D}}(y) dy$, where $h_m(y)$
is the $m$-dark length of the horizontal line through $y$, $h_{m,{\cal
    D}}(y)$ is the length of the intersection of $R^{(m,{\cal D})}_x$
with a horizontal line through $y$, and $h(y)$ is the chargeable
length of the horizontal line through $y$.  In other words, $h_m(y)$
(resp., $h_{m,{\cal D}}(y)$) is the length of the $m$-dark (resp.,
$m$-disk-dark) portion of the horizontal line through $y$.  Thus,
since $A_x\geq A_y$, we get that $\int_0^1 h(y) dy \geq \int_0^1 g(y)
dy\geq 0$.  Thus, it cannot be that for all values of $y\in[0,1]$,
$h(y)<g(y)$, so there exists a $y=y^*$ for which $h(y^*)\geq g(y^*)$.
The horizontal line through this $y^*$ is a cut satisfying the claim
of the lemma.

If, instead, we had $A_x\leq A_y$, then we would get a
{\em vertical} cut satisfying the claim.
\end{proof}

Now that we know there must be a favorable cut, $\ell$, we can charge
off the cost of the $m$-span and the $m$-disk-span of $\ell$, making
``red'' charge on the bottoms (resp., tops) of segments of $E$ that
lie above (resp., below) $m$-dark points of $\ell$, and making
``blue'' charge on the bottoms (resp., tops) of disks that lie above
(resp., below) $m$-disk-dark points of~$\ell$.
We then recurse on each side of the cut, in the two new windows.

After a portion of $E$ has been charged {\em red} on one side, due to
a cut $\ell$, it will be within $m$ levels of the boundary of the
windows on either side of $\ell$, and, hence, within $m$ levels of the
boundary of any future windows, found deeper in the recursion, that
contain the portion.  Thus, no portion of $E$ will ever be charged
{\em red} more than once from each side, in each of the two
directions, horizontal or vertical, so no portion of $E$ will ever pay
more than $\sqrt{2}/m$ times its length in red charge.  We charge at
the rate of ${1\over 2m}$ per unit length of the perimeter of the
segment's axis-aligned bounding box, and the worst case is achieved
for a segment of slope $\pm 1$.  Thus, the total red charge is at most
${\sqrt{2}\over m}L$.

Similarly, no disk will ever have its boundary charged {\em blue} more
than once from each of the two directions, horizontal or vertical.
Since it is charged at the rate of ${1\over 2m}$ per unit length of
its axis-aligned bounding box, whose perimeter is $8\delta$, we get a
total blue charge of at most ${4\delta \over m}n$.  We now appeal to
the lower bound from the previous section, which was based on an area
argument.  Here, for radius $\delta$ disks, that argument shows that
$\pi\delta^2 n\leq A_{L} \leq 4\delta L+(2\delta)^2\pi$, from which we
get $n\leq {4\over \pi\delta} L+ 4$.  Note that this area argument uses the
fact that $E$ is connected.  Thus, the total blue charge is at most
${16\over \pi m}L + {16\delta\over m}$.

It is also important to note that we are always charging red portions
of the original edge set $E$: the new edges added are never
themselves charged, since they lie on window boundaries and cannot
therefore serve to make a portion of some future cut $m$-dark.

Overall, then, the total increase in length caused by adding
the $m$-spans and $m$-disk-spans along favorable cuts
is bounded by 
$${16+\pi\sqrt{2}\over \pi m}L + {16\delta\over m}.$$

Our goal in adding the $m$-disk-span is to obtain a succinct
representation of which disks that straddle the boundary of a window
are visited within the window and which are visited outside the
window.  The $m$-disk-span segment visits all but a constant, $O(m)$, 
number of the disks on the corresponding side of the window.  There
is, however, one remaining issue with respect to the $m$-disk-span
segments: We need to argue that we can ``afford,'' within our charging
scheme, to {\em connect} the $m$-disk-span to the input edge set, $E$.
This is because, in our dynamic programming optimization, we will
find a shortest possible 
planar graph with the $m$-guillotine property that obeys certain
connectivity constraints, as well as other properties that guarantee
that the graph has an Eulerian subgraph spanning all disks.  The
optimal graph that we compute uses the $m$-disk-span segments to visit
the corresponding disks on the boundaries of windows that define
subproblems.

\remark
{\em 
(Clarification added after journal publication.)
We had previously phrased the above as ``find a shortest possible connected planar graph with the
$m$-guillotine property''; we have rephrased to be``find a shortest possible 
planar graph with the $m$-guillotine property that obeys certain connectivity constraints''.
}

In particular, we add connections to $E$ from the endpoints of the
$m$-disk-span to the point of $E$ in the corresponding disk that is
closest to the endpoint.  We know that this connection is of length at
most $2\delta$ per endpoint of the $m$-disk-span, since the disks
are of diameter $2\delta$.  In total, this adds only $4\delta$ to the
length of the $m$-disk-span.  Assuming the $m$-disk-span stabs at
least three or more disks, its length is at
least $\Omega(\delta)$, implying that we can charge off this extra
$4\delta$ for the connections in the same way that we charge off the
$m$-disk-spans themselves.  If, on the other hand, the $m$-disk-span
stabs only one or two disks, then we can afford to skip the addition
of the $m$-disk-span altogether, and just keep track in the dynamic
program of the necessary information for these couple extra disks, 
specifying whether they are to be visited within the window or not.

\remark
{\em (Clarification added after journal publication; thanks to
  Sophie Spirkl for her inquiry and input.)  We argue above that we
  can afford to add connections to $E$, since the length added is
  proportional to the bridge length.  Our dynamic programming
  algorithm computes a minimum-length planar network with the
  $m$-guillotine property that obeys certain connectivity constraints,
  so that all of the network is connected except (possibly) the
  $m$-disk-spans, which may not be connected to the rest of the
  (connected) network.  The objective function in the dynamic program
  requires that we minimize the total length of the network, counting
  the lengths of the bridges that serve as $m$-disk-spans a constant
  number of times.  Then, since we know we can afford to add length
  proportional to the total $m$-disk-span lengths, we know we can add
  the connections mentioned above, resulting in an overall connected
  network (and can make it Eulerian with appropriate doubling of the
  bridge segments, in the usual way, as mentioned again below),
  appropriately close to the optimal length.  
}
\end{proof}

\begin{corollary}
\label{cor:ptas}
The TSPN for a set of disjoint equal-size disks has a PTAS.
The same is true for a set of nearly disjoint, nearly equal-size
disks, for which there is a constant upper bound on the ratio of 
largest to smallest radius and there is a constant factor such
that the disks become disjoint if their radii are each multiplied 
by the factor while keeping the center points the same.
\end{corollary}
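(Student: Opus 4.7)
The plan is to combine the structural result of Theorem~\ref{thm:main-guil} with a standard $m$-guillotine dynamic programming optimization, with $m$ chosen as a function of $\epsilon$, and to handle the additive $\delta$-term separately.

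First, I would normalize the input by scaling so that the set ${\cal D}$ of $n$ disjoint disks of radius $\delta$ lies inside the unit square $B$. If $n$ is smaller than a constant depending on $\epsilon$, I would solve the instance exactly by brute force and return that tour, so henceforth we may assume $n$ is as large as we like. I would then apply Theorem~\ref{thm:main-guil} to the optimal tour $T^*$, regarded as a connected planar graph of length $L^*$: this yields an $m$-guillotine graph $G^*$ with edge set $E^*\supseteq T^*$ (so $G^*$ still visits every disk) of length at most $\bigl(1+\tfrac{\sqrt{2}+16/\pi}{m}\bigr)L^* + \tfrac{16\delta}{m}$. Choosing $m=\lceil c/\epsilon\rceil$ for a suitable constant $c$ makes the multiplicative overhead at most $1+\epsilon/2$. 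For the additive term, the area bound from Section~\ref{sec:equal-disks} gives $L^*\geq \tfrac{\pi\delta(n-4)}{4}$, so once $n$ exceeds a threshold depending only on $\epsilon$ and $m$ (hence only on $\epsilon$), we have $16\delta/m \leq (\epsilon/2)L^*$, and the total length of $G^*$ is at most $(1+\epsilon)L^*$.

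Next I would use dynamic programming to find the shortest $m$-guillotine planar network $G'$ obeying the relevant connectivity constraints that visits all disks. The subproblems are indexed by axis-aligned windows $W$ whose sides have coordinates drawn from an enumerable set (the disk boundary coordinates and $O(nm/\epsilon)$ refinement points), together with $O(m)$ boundary-crossing points per side, the $O(m)$ disks crossing each side that are \emph{not} absorbed by some earlier $m$-disk-span (and for each such disk a bit indicating whether the subproblem is responsible for visiting it), and a constant-size sketch of how these boundary items must be interconnected inside $W$. For each such subproblem the DP enumerates favorable cuts (together with their $m$-span, $m$-disk-span, and the short connectors of length $O(\delta)$ from the $m$-disk-span endpoints to the respective disks, exactly as in the proof of Theorem~\ref{thm:main-guil}), splits the window, and takes the best over all ways to distribute the boundary information. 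Because $G^*$ itself is a feasible solution to the top-level DP, the network $G'$ returned has length at most $|E^*|\leq (1+\epsilon)L^*$. Finally, we convert $G'$ into a tour by doubling the non-tree edges (and the bridge segments corresponding to $m$-disk-spans, the constant number of times used in the objective) and taking an Euler tour, which does not increase the length by more than the small amounts already accounted for; this yields a valid tour of length at most $(1+O(\epsilon))L^*$, and rescaling $\epsilon$ gives the PTAS.

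The main technical obstacle is the bookkeeping in the dynamic program: because regions (not points) cross window boundaries, the subproblem description must record, for each of the $O(m)$ disks straddling each side that lie outside the $m$-disk-span, whether it is the subproblem's responsibility to visit it, and the $m$-disk-spans themselves must be treated as ``bridges'' that may not be connected to the rest of the network until the final assembly (as emphasized in the clarifications in the proof of Theorem~\ref{thm:main-guil}). Once this is handled correctly, the number of subproblems is polynomial in $n$ for fixed $\epsilon$, and the number of cuts considered per subproblem is also polynomial, so the running time is polynomial in $n$ for each fixed $\epsilon$. For the nearly-disjoint, nearly-equal case, each step of the above argument goes through with constants depending on the bounded ratio of radii and the bounded overlap factor: in particular, the area lower bound on $L^*$ still gives $L^* = \Omega(\delta n)$, the $m$-disk-dark charging scheme still charges each disk boundary $O(1)$ times per direction, and the connector length from $m$-disk-span endpoints to a point of $E$ inside the corresponding disk remains $O(\delta)$. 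So the PTAS extends, completing the proof of the corollary.
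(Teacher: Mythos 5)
Your overall strategy is the same as the paper's: apply Theorem~\ref{thm:main-guil} to a near-optimal tour, absorb the additive $16\delta/m$ term via the packing bound $L^*=\Omega(\delta n)$ (brute-forcing small $n$), run a dynamic program over windows carrying $O(m)$ boundary information per side --- including the disk-bridge and per-disk responsibility bits --- and extract a tour by doubling bridges and taking an Euler tour. The treatment of the nearly-disjoint, nearly-equal case is also in the spirit of the paper's one-line remark.

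There is, however, one concrete gap in the chain of inequalities: you apply Theorem~\ref{thm:main-guil} directly to the exact optimal tour $T^*$ and then assert that the resulting graph $G^*$ ``is a feasible solution to the top-level DP.'' It is not, as written. The vertices and window-crossing points of $G^*$ are arbitrary real points inherited from $T^*$, whereas your DP (like any such DP) only searches over networks whose interaction with window boundaries occurs at a polynomial-size set of candidate coordinates; merely indexing the windows by ``disk boundary coordinates and $O(nm/\epsilon)$ refinement points'' does not place $G^*$ inside that search space. The paper closes this hole \emph{before} invoking the structure theorem: it imposes an $m\times m$ grid on each disk (giving a set ${\cal G}$ of $O(m^2n)$ grid points), perturbs the vertices of $OPT$ onto ${\cal G}$ at a cost of $O(n\delta/m)$ --- which is again absorbed by $n\le \frac{4}{\pi\delta}L^*+4$ --- and only then applies Theorem~\ref{thm:main-guil} to the grid-rounded tour $OPT'$. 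The DP then requires that each disk be visited at a grid point of ${\cal G}$, so the guillotine version $OPT''$ of $OPT'$ genuinely lies in the search space and certifies that the DP's output has length at most $(1+O(1/m))L^*$. Your argument needs this perturbation step (or an equivalent discretization of $G^*$) inserted before the feasibility claim; with it, the rest of your proof goes through as in the paper.
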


\begin{proof}
  We consider only the case of disjoint disks each of radius $\delta$;
  the generalization to nearly disjoint, nearly equal-size disks is
  straightforward.

We impose a regular $m$-by-$m$ grid on each disk; let ${\cal G}$
denote the resulting set of $O(m^2n)$ grid points.

Consider an optimal tour, $OPT$, of length $L^*$.  Now, $OPT$ is a
simple polygon.  We can perturb the vertices of $OPT$ so that each
lies at a grid point in ${\cal G}$, resulting in a new tour, $OPT'$,
visiting every disk, whose length is at most $L^*+O(n\delta/m)$.
Using the fact, from the previous section, that $n\leq {4\over \pi\delta} L^*+ 4$, and assuming
that $n\geq 8$, we get
that the length of $OPT'$ is at most $L^*+O({4\over \pi
  m}L^*+{4\delta\over m})\leq L^*(1+O({8\over \pi m}))$.
If $n<8$, then we can solve the problem in constant time by brute force.

Theorem~\ref{thm:main-guil} implies that we
can convert $OPT'$, which consists of some set $E$ of edges, 
into a planar graph, $OPT''$, obeying the $m$-guillotine property,
while not increasing the total length by too much.
In particular, the length of $OPT''$ is 
at most 
$$(1+{\sqrt{2}+16/\pi \over m})L^*(1+O({8\over \pi m})) + {16\delta\over m}\leq (1+O(1/m))L^*.$$

We now apply a dynamic programming algorithm, running in $O(n^{O(m)})$
time, to compute a minimum-length planar graph having a prescribed set
of properties: (1) it satisfies the $m$-guillotine property, which is necessary
for the dynamic program to have the claimed efficiency; (2) it visits
at least one grid point of ${\cal G}$ in each region $D_i$; and (3) it
contains an Eulerian subgraph that spans the disks. This third condition
that allows us to extract a tour in the end.  We only outline here
the dynamic programming algorithm; the details are very similar to
those of \cite{m-gsaps-99-new}, with the modification to account for the
$m$-disk-span.

A subproblem is defined by a rectangle $W$ whose coordinates
are among those of the grid points ${\cal G}$, together with a constant
amount ($O(m)$) of information about how the solution to the subproblem
interacts across the boundary of $W$ with the solution outside of $W$.
This information includes the following:
\begin{description}
\item[(a)] For each of the four sides of $W$, we specify a ``bridge''
  segment and at most $2m$ other segments with endpoints among ${\cal
    G}$ that cross the side; this is done exactly as in the case of
  the Euclidean TSP on points, as in \cite{m-gsaps-99-new}.
\item[(b)] For each of the four sides of $W$, we specify a ``disk
  bridge'' segment corresponding to the $m$-disk-span, and, for each
  of at most $2m$ disks that are not intersected by the disk bridge
  segment, we specify in a single bit whether the disk is to be
  visited within the subproblem or not; if not, it is visited outside
  the window~$W$.
\item[(c)] We specify a required ``connection pattern'' within $W$.
  In particular, we indicate which subsets of the $O(m)$ specified
  edges crossing the boundary of $W$ are required to be connected
  within $W$.  This, again, is done exactly as is detailed for the
  Euclidean TSP on point sets in \cite{m-gsaps-99-new}.
\end{description}

In order to end up with a graph having an Eulerian subgraph spanning
the disks, we use the same trick as done in \cite{m-gsaps-99-new}: we
``double'' the bridge segments and the disk bridge segments, and then
require that the number of connections on each side of a bridge
segment satisfy a parity condition.  Exactly as in
\cite{m-gsaps-99-new}, this allows us to extract a tour from the
planar graph that results from the dynamic programming algorithm, 
which gives a shortest possible graph that obeys the specified
conditions.
  
The result is that in polynomial time ($O(n^{O(m)})$) one can compute
a shortest possible graph, from a special class of such graphs, and
this graph spans the regions ${\cal D}$.  Theorem~\ref{thm:main-guil}
guarantees that the length of the resulting graph is very close, 
within factor $1+O(1/m)$, to the length of an optimal solution to the
TSPN.  
(We also know from the remarks previously, that we can afford to
add connections to assure that the $m$-disk-spans are connected to the
rest of the (connected) network.)
Thus, once we extract a tour from the Eulerian subgraph, we
have the desired $(1+\epsilon)$-approximation solution, where
$\epsilon=O(1/m)$.
  
Finally, we mention that the running time of $O(n^{O(m)})$ can be
improved to $O(n^{C})$, for a constant $C$ independent of $m$, using
the notion of grid-rounded guillotine subdivisions, as
in~\cite{joe-tsp-new,joe-cccg}; here, the dependence on the constant
$m$ is exponential in the multiplicative constant concealed by the
big-Oh of $O(n^{C})$.  We suspect that the techniques of Rao and
Smith~\cite{rs-98}, based on Arora's method~\cite{a-nltas-98}, can be
used to improve the time bound to $O(n\log n)$, while possibly also
addressing the problem in higher dimensions; we leave this for future
work.
\end{proof}

\section {Connected Regions of the Same Diameter} 
\label{sec:same-diameter}

In this section we give a constant-factor approximation algorithm
for the TSPN problem that applies in the case that all regions
have the same diameter or nearly the same diameter.

The {\em diameter} of a region, $\delta$, is the distance between
two points in the region that are farthest apart. 
Without loss of generality, we assume that all regions have unit diameter, 
$\delta=1$.
The general method we use is to carefully select a representative
point in each region, and then compute an almost optimal tour
on these representative points. This approach was initiated in 
\cite{ah-aagcs-94}.
We also employ a specialized version of the Combination
Lemma in \cite{ah-aagcs-94}.
Now we describe the algorithm, which is similar in many aspects to the
one in \cite{ah-aagcs-94}.

In each region, compute a unit-length diameter segment; in case of
multiple such segments, select one arbitrarily.
Computing a diameter segment of a region can be done
efficiently, in time linear in the complexity of the region.
Classify the regions into two types: (1) those for which the selected
diameter is {\em almost horizontal}, by which we mean its slope is
between $-45^{\circ}$ and $45^{\circ}$; (2) those for which the
selected diameter is {\em almost vertical}, by which we mean all the
others.  
Use algorithm {\bf A} (below) for each of
these two region types.  We will prove that a constant ratio is
achievable for each class, after a suitable transformation is applied
to regions of type (2). We then apply the Combination Lemma to
obtain a constant-factor approximation for all regions.

\begin{lemma} {\rm (Combination Lemma)} \label {L1} 
Given regions that can be partitioned into two types ---
those having almost horizontal unit diameters,
and those having almost vertical unit diameters respectively ---
and constants
$c_1, c_2$ bounding the error ratios with which we can approximate
optimal tours on regions of types $1$ and $2$, then we can 
approximate the optimal tour on all regions with an error ratio 
bounded by $c_1+c_2+2$.
\end{lemma}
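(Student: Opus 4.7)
The plan is to run the two algorithms on $R_1$ and $R_2$ separately to get tours $T_1$ and $T_2$, then splice them into one closed tour for $R_1 \cup R_2$ by adding a pair of short ``bridge'' segments routed along the optimal tour $T^*$. Let $L^* = |T^*|$.

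The first step is a standard shortcutting argument: since $T^*$ visits every region, replacing its excursions outside $R_1$ by straight shortcuts produces a valid tour of $R_1$ whose length is at most $L^*$, so $L_1^* \le L^*$ and symmetrically $L_2^* \le L^*$. Thus the algorithms yield
\[
|T_1| \le c_1 L_1^* \le c_1 L^* \qquad \text{and} \qquad |T_2| \le c_2 L_2^* \le c_2 L^*,
\]
assuming both $R_1, R_2$ are nonempty (the empty case is trivial).

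The second step constructs the bridges. Fix any $D_1 \in R_1$ and $D_2 \in R_2$. The tour $T^*$ enters $D_1$ at some point $x_1$ and $D_2$ at some point $x_2$, while $T_1$ meets $D_1$ at some $y_1$ and $T_2$ meets $D_2$ at some $y_2$. Since each $D_i$ has unit diameter, $|x_i - y_i| \le 1$. A bridge $\beta$ from $y_1$ to $y_2$ that goes $y_1 \to x_1$, then follows an arc of $T^*$ from $x_1$ to $x_2$, then $x_2 \to y_2$ has length at most $L^* + 2$. Concatenating $T_1$, $\beta$, $T_2$, and a second copy of $\beta$ (traversed in reverse) yields a single closed curve $T$ visiting all regions of $R_1 \cup R_2$, with
\[
|T| \le |T_1| + |T_2| + 2|\beta| \le c_1 L^* + c_2 L^* + 2L^* + 4 = (c_1 + c_2 + 2) L^* + 4.
\]
The additive constant is absorbed into the ratio either by assuming $L^*$ is bounded below (small instances, where there are only a few regions of bounded diameter, can be solved by brute force) or by incorporating it into the lower-order terms as is standard in this line of work.

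The main obstacle is getting a bound on the bridge length that is \emph{linear in $L^*$} rather than potentially unbounded: two region-visiting tours $T_1$ and $T_2$ can in principle lie far apart, so one cannot bound their minimum separation purely in terms of the diameters. Using $T^*$ itself as a routing template circumvents this, and the factor ``$+2$'' comes from the fact that a bridge must be traversed twice to restore a single closed tour from two disjoint closed curves. The rest is routine triangle-inequality bookkeeping plus the shortcutting argument $L_i^* \le L^*$.
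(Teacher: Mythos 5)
The paper does not actually prove this lemma---it is stated with the remark that the proof is ``a simplified version of the argument in Arkin--Hassin,'' so there is no line-by-line comparison to make. Your argument is the standard one that the citation points to: bound $L_i^* \le L^*$ because $T^*$ already visits every region of $R_i$ (the shortcutting step is not even needed), then join $T_1$ and $T_2$ by a doubled bridge whose length is charged to $L^*$; the accounting that produces the ``$+2$'' is the intended one. Two points deserve attention, though. First, as written your bridge $\beta$ physically follows an arc of $T^*$, which the algorithm does not know; the construction should use the straight segment $y_1y_2$ (or the closest pair of points of $T_1$ and $T_2$), with $T^*$ entering only the analysis via the triangle inequality $|y_1y_2| \le 1 + L^* + 1$. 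This is easily repaired but matters for an algorithmic lemma. Second, the additive $+4$ is real and your proposed way of absorbing it is not valid: a small value of $L^*$ does not imply a small number of regions (indeed $L^*$ can equal $0$ when all regions share a common point, in which case $T_1$ and $T_2$ may be distinct points at distance up to $2$ and no purely multiplicative bound on the spliced tour holds). So what you have actually proved is $|T| \le (c_1+c_2+2)L^* + 4$ (or $(c_1+c_2+1)L^* + 4$ if you route along the shorter arc of $T^*$), not a clean ratio of $c_1+c_2+2$ as the lemma asserts and as the paper later uses in computing $15.5+15.5+2=33$. This additive term is implicit in the argument the paper defers to, and the paper's other bounds (Propositions~1 and~2) carry similar additive constants, so the discrepancy lies as much with the lemma's statement as with your proof; still, you should either state the conclusion with the additive term or give an honest hypothesis (e.g., $L^* \ge 4$) under which it disappears, rather than the brute-force claim.
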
 

We omit the proof, which is a simplified version of the argument in
\cite{ah-aagcs-94}. The bound on the approximation ratio is still the
same. 
We remark that the Combination Lemma in \cite{ah-aagcs-94} implicitly
assumes that for each of the two types, the diameters of the regions 
are parallel to some direction, an assumption which does not hold in our case. 

Algorithm {\bf A} gives a constant-factor approximation of the optimal
tour for regions of type (1). Regions of type (2) are readily handled
by rotating them by $90^{\circ}$ to obtain type (1) regions.
When the diameters are nearly the same, so that the
ratio of the largest to the smallest is bounded by a constant,  
we can still get an approximation algorithm with a constant ratio;
however since this ratio is rather large even for same diameters, we
omit the calculations.

A set of lines is a {\em cover} of a set of regions if each region is
intersected by at least one line from the covering set. 
We refer to such a set of lines as {\em covering lines}.\\

\begin{figure}[htbp]
\centerline{\epsfysize=3.5in \epsffile{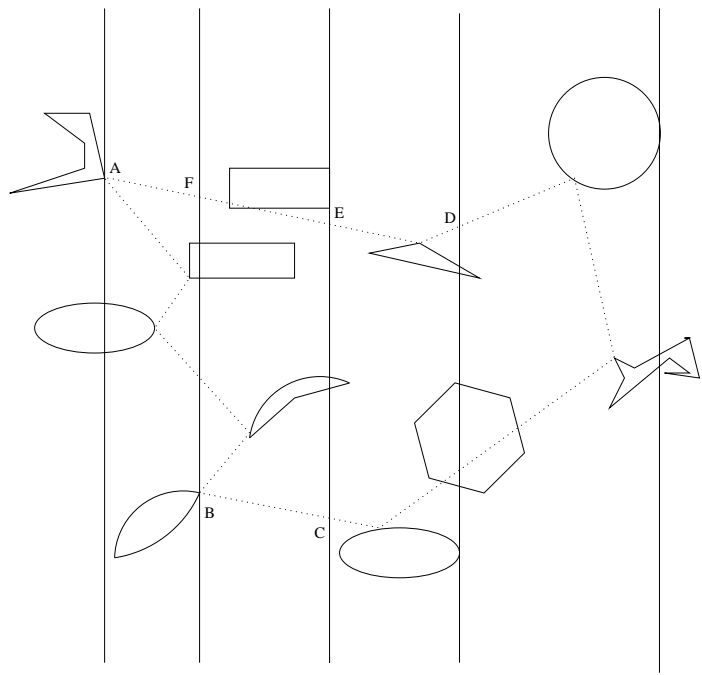}}
\caption{Illustration of Step 1 in Algorithm {\bf A}}
\label{cover}
\end{figure}

\noindent {\bf Algorithm A}. Input: a set of $n$ regions of type (1).\\

\noindent {\em Step 1}. Construct a cover of the regions by a 
minimum number of vertical lines. 
This procedure works in a greedy fashion,
namely the leftmost line is as far right as possible, so that
it is a right tangent of some region. To obtain this cover,
the intervals of projection on the $x$-axis of all regions are
computed and a greedy cover of this set of intervals is found.
After removing all intervals covered by a previous line, 
and there are still uncovered intervals,
another covering line is repeatedly added to the cover.
At the same time, a representative point for each region is 
arbitrarily selected on the corresponding covering line
and inside the region, e.g., the topmost boundary point of 
intersection between the region and its covering line. 
In this way $n$ representative points, one per region, are selected.
An illustration of this procedure appears in Figure \ref{cover}.
An important remark is that the representative points are not necessarily
on the selected diameters, since a diameter may not be entirely 
contained in its region.

Assuming that we have only regions of type (1), the greedy cover has
the effect of obtaining a large enough horizontal distance between any
two consecutive covering lines.  We remark that the greedy covering
algorithm for a set of closed intervals on a line is known to output a
cover of minimum size, a property which carries
over to  our vertical line cover.\\

\noindent {\em Step 2.} Proceed according to the following three cases.

\begin{description}
\item[Case 1:] The greedy cover contains one covering line.

Compute a smallest perimeter axis-aligned rectangle $Q$
that intersects all regions,
where $Q$ is considered a two-dimensional domain.  
Let $w$ and $h$ denote $Q$'s width and height respectively.
Consider $Q$ to be a graph with four vertices and four edges (its
four sides). Let $e_1$ and $e_2$ be the vertical segments
of height $h$ that partition $Q$ into three equal-width parts.
Add to the graph $Q$ a double edge corresponding to $e_1$ and
a doubled edge corresponding to $e_2$.  The resulting  
graph $G$ is an Eulerian multigraph, since all node
degrees are even, with $8$ vertices and $12$ edges. 
Output any Euler tour $T$ of $G$. While such a tour is not, in general,
the shortest possible tour visiting the regions, it suffices for
our purposes of approximation.

\item[Case 2:] The greedy cover contains two covering lines.

Move (if possible) the rightmost vertical covering line to the left 
as much as possible (while still covering all regions). Recompute
the representative points obtained in this way. Set $D$
to be the distance between the two covering lines (clearly $D>0$). 

\begin{description}
\item[Case 2.1:] $D \geq 3$. 

Compute an axis-aligned rectangle $Q$ of width $w=D$, with its vertical
sides along the two covering lines, and of minimal height, which
includes all representative points (on the two covering lines). 
Let $h$ denote $Q$'s height.
Output the tour $T$ that is the perimeter of $Q$.

\item[Case 2.2:] $D \leq 3$ (similar to Case 1).

Compute a smallest perimeter axis-aligned rectangle $Q$
that touches (intersects) all regions,
where $Q$ is considered to be a two-dimensional domain.  
Let $w$ and $h$ denote $Q$'s width and height respectively.
Note that $w\leq 5$.  
Let $e_1,\ldots,e_7$ be the vertical segments
of height $h$ that partition $Q$ into eight equal-width parts.
Consider the edges of $Q$, together with doubled
copies of the edges $e_1,\ldots,e_7$, to define an Eulerian multigraph, $G$,
having 18 vertices and 32 edges.
Output any Euler tour $T$ of $G$. 
\end{description}

\item[Case 3:] The greedy cover contains at least three covering lines.

Compute $T$, a $(1+\epsilon)$-approximate tour of the representative
points as the output tour.
\end{description}

If the regions are simple polygons, then a minimum-perimeter
touching rectangle $Q$ is determined by at most four contact points
with region boundary arcs.
A brute force procedure examining all possible
$k$-tuples, $k \leq 4$, of such arcs computes $Q$ in $O(N^5)$ time.  
The total running time of the approximation algorithm is either
bounded by the complexity of the above step or by the complexity of 
computing $(1+\epsilon)$-approximate tours on $n$ points, depending on
the size of the greedy cover.  

\begin{theorem}
\label{thm:main-approx}
Given a set ${\cal R}$ of $n$ connected regions of the same diameter
in the plane, a $O(1)$-approximation of an optimal tour can be
computed in polynomial time.
\end{theorem}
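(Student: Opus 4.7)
My plan is to apply the Combination Lemma (Lemma~\ref{L1}), so that it suffices to show Algorithm~\textbf{A} achieves a constant approximation ratio on type~1 regions (those whose selected unit diameter has slope in $[-45^{\circ},45^{\circ}]$). Type~2 regions are handled by rotating the plane by $90^{\circ}$, reducing them to type~1 without changing tour lengths; if Algorithm~\textbf{A} has ratio $c_1$ on type~1 inputs, Lemma~\ref{L1} then yields the overall ratio $2c_1+2$. The key elementary fact I will use throughout is that every type~1 region has horizontal extent at least $1/\sqrt{2}$: its selected unit-length diameter has slope in $[-45^{\circ},45^{\circ}]$ and lies in the region. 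Consequently, in the greedy vertical-line cover, any two consecutive covering lines are horizontally separated by at least $1/\sqrt{2}$, because the region forcing a new line has its leftmost point strictly to the right of the previous line and its rightmost point on the new line.

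For Cases~1 and~2 I compare the explicit output length with a lower bound on $L^{*}$ derived from the minimum-perimeter rectangle $Q$ of dimensions $w \times h$. Because $Q$ has minimum perimeter among rectangles intersecting all regions, each of its four sides is touched by some region, and these extremal regions of unit diameter force $T^{*}$ to span vertically at least $h-2$ and, when applicable, horizontally at least $w-2$. Being a closed curve, $T^{*}$ then satisfies $L^{*} \geq 2\max(w-2, h-2)$. In Case~2.1 one additionally has $L^{*} \geq 2D$ since the two extreme covering lines are separated by $D \geq 3$. Combined with the explicit output lengths ($2w+6h$ in Case~1, $2(D+h)$ in Case~2.1, and $2w+16h$ with $w \leq 5$ in Case~2.2), and a brute-force fallback for configurations of constant size, these bounds yield a constant approximation ratio in each subcase.

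The crux is Case~3, where Algorithm~\textbf{A} outputs a $(1+\epsilon)$-approximate TSP tour on the $n$ representative points and I must show the optimal representative tour has length $O(L^{*})$. Here the naive detour bound $L^{*} + 2n$ is too weak, since $n$ can far exceed $L^{*}$ when many regions share a covering line. Instead I plan to exhibit an explicit feasible tour of the representatives by a boustrophedon along the $k$ covering lines: on each $\ell_i$ traverse the representatives as a vertical segment of length $V_i$ (the vertical span of representatives on $\ell_i$), transition horizontally to the next line, traverse it in the opposite direction, and close back; coupled with a standard MST-style argument to keep transition costs in check, this gives a representative tour of length $O\bigl(H + \sum_i V_i\bigr)$, where $H$ is the horizontal span of the extreme covering lines.

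The main obstacle will be the matching lower bound $L^{*} = \Omega\bigl(H + \sum_i V_i\bigr)$. The bounds $L^{*} \geq 2H - O(1)$ and $L^{*} \geq 2\max_i V_i - O(1)$ come from extremal-region arguments as in Cases~1 and~2; the hard part is $\sum_i V_i = O(L^{*})$. My plan is a grouping argument exploiting the greedy spacing: since consecutive covering lines are $\geq 1/\sqrt{2}$ apart, I partition them into a constant number of classes so that within each class the covering lines are spaced more than $2$ apart; the width-$2$ horizontal slabs centered at covering lines of the same class are then pairwise disjoint, and each such slab contains all regions assigned to its covering line (whose horizontal extent is at most~$1$). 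Within a single class, the portions of $T^{*}$ inside each slab must visit all regions on the corresponding $\ell_i$ and hence have total length at least $V_i - 2$ (the extremal regions on $\ell_i$ force visit points within $y$-distance~$1$ of the extremal representatives). Summing over the disjoint slabs of one class and then over the $O(1)$ classes gives $L^{*} \geq \Omega\bigl(\sum_i V_i\bigr) - O(k)$; combined with $k = O(L^{*})$ from the horizontal spacing bound, this yields $\sum_i V_i = O(L^{*})$ and completes Case~3 and the theorem.
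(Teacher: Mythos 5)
Your treatment of Cases 1 and 2 is essentially the paper's (rectangle $Q$, feasibility via the horizontal extent $\geq 1/\sqrt{2}$ of type~1 regions, lower bounds from the forced horizontal/vertical span of $T^*$), and your lower-bound machinery for Case 3 --- the spacing $>1/\sqrt{2}$ of consecutive greedy lines, the partition into $O(1)$ classes with disjoint width-2 slabs, and $k=O(L^*)$ --- is sound. But there is a genuine gap at the crux of Case 3: the claimed representative tour of length $O\bigl(H+\sum_i V_i\bigr)$ does not exist in general. Take three covering lines at $x=0,1,2$, each carrying a single representative, at heights $0$, $100$, $0$ respectively: then $H=2$ and $\sum_i V_i=0$, yet every tour of the three representatives has length at least $200$. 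The quantity $H+\sum_i V_i$ simply does not see the vertical offsets \emph{between} the clusters of representatives on different covering lines, and the ``standard MST-style argument to keep transition costs in check'' is not standard --- it is precisely the hard part of the proof, and your matching lower bound $L^*=\Omega\bigl(H+\sum_i V_i\bigr)$ is therefore a bound on the wrong quantity.

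What is needed, and what the paper does, is to charge the vertical transitions to the vertical excursions of $OPT$ itself. The paper partitions $OPT$ into blocks $OPT_i$ according to the order in which it crosses the covering lines (each block ending at the last intersection with one line before a different line is crossed); each block meets regions stabbed by only two or three consecutive lines, and for each block one exhibits an explicit representative path $T_i$ (running up and down the relevant covering lines at unit offset from the bounding box of $OPT_i$) whose length is at most a constant times $|OPT_i|$, the lower bound on $|OPT_i|$ coming from an unfolding/reflection argument of the form $|OPT_i|\geq\sqrt{(h+a+b)^2+w_1^2}$. Concatenating the $T_i$ gives a representative tour of length $O(L^*)$, which is the statement you need. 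Your slab argument can recover $\sum_i V_i=O(L^*)$, but without some device that tracks the order in which $OPT$ moves between lines (or an alternative way to bound the inter-line vertical connection cost by $O(L^*)$), Case 3 does not close.
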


\begin{proof}
Let $OPT$ be an optimal region tour.  We address each of the cases we
distinguished in the previous algorithm. We will use repeatedly the
following simple fact (see \cite{ah-aagcs-94}):
For positive $a,b,w,h$ the following inequality holds
$$ aw+bh \leq \sqrt{a^2+b^2} \sqrt{w^2+h^2}. \eqno(4) $$

\noindent {\bf Case 1}. Write $diag(Q)$ for the diagonal of the 
rectangle $Q$. We first argue that $T$ visits all regions.
Since all regions are covered by a unique covering line,
they lie in a vertical strip of width $\leq 2\delta=2$. So $w \leq 2$.
The horizontal projection of each type (1) region (on the $x$-axis) is
at least $1/\sqrt{2}$. Hence, each region is intersected either
by the boundary of $Q$ or by one of the two vertical segments 
inside  $Q$, since these segments partition $Q$ into
three subrectangles each of width at most $2/3<1/\sqrt{2}$.
(Each region that is not intersected by the perimeter of $Q$ 
lies entirely inside $Q$). Consequently, $T$ is a valid region tour. 
We first give a lower bound on $|OPT|$.
Since the length of a tour touching all four sides of a rectangle is
at least twice the length of the diagonal of the rectangle (see
\cite{ah-aagcs-94}), 
$$ |OPT| \geq 2 diag(Q)=2\sqrt{w^2+h^2}. $$
The length of $T$ is 
$$ |T|=2w+6h=2(w+3h) \leq 2\sqrt{10} \sqrt{w^2+h^2} \leq \sqrt{10}|OPT|. $$

\noindent {\bf Case 2.1:} $D \geq 3$. We distinguish two sub-cases.

\qquad {\bf Case 2.1.a:} $h \leq 2$. Recall that $w=D \geq 3$.
$$ |OPT| \geq 2(w-2), $$
$$ |T|=2w+2h \leq 2w+4 \leq 10(w-2) \leq 5|OPT|. $$ 

\qquad {\bf Case 2.1.b:} $h \geq 2$. 
Since each region has unit diameter, the optimal tour may lie
inside $Q$ at distance at most one from its boundary, so
it must touch all four sides of some rectangle $Q'$ having width
$\geq w-2$ and height $\geq h-2$. Hence
$$ |OPT| \geq 2\sqrt{(w-2)^2+(h-2)^2}. $$
Since $8w+8h \geq 40$, we have
$$ |T|=2w+2h \leq 10((w-2)+(h-2)) \leq 10\sqrt{2}\sqrt{(w-2)^2+(h-2)^2}
\leq 5\sqrt{2}|OPT|.  $$ 

\noindent {\bf Case 2.2:} $D \leq 3$. 
The horizontal projection of each region is at least
$1/\sqrt{2}$ and the rectangle $Q$, of width $w\leq 5$,
is partitioned into 8 subrectangles, each of width $\leq 5/8$,
by the vertical segments $e_1,\ldots,e_7$.
Thus, since $1/\sqrt{2}>5/8$, $T$ visits all
regions. (As in Case~1, each region that is not intersected by the
perimeter of $Q$ lies entirely inside $Q$; hence, it is intersected by
one of the seven vertical segments.) A similar calculation yields 
$$ |OPT| \geq 2 diag(Q)=2 \sqrt{w^2+h^2}. $$
$$ |T|=2w+16h \leq 2\sqrt{1^2+8^2} \sqrt{w^2+h^2} \leq 2(8.1)\sqrt{w^2+h^2} 
\leq 8.1|OPT|. $$

\noindent {\bf Case 3}. Partition the optimal tour $OPT$ into blocks $OPT_i$,
with $i \geq 1$. $OPT_1$ starts at an arbitrary point of intersection
of $OPT$ with the leftmost covering line, and ends at the last intersection of $OPT$ 
with the second from the left covering line, before $OPT$ intersects a
different covering line. 
Notice that $OPT$ does not cross
to the left of the leftmost covering line.
In general, the blocks $OPT_i$ are determined by the 
last point of intersection of $OPT$ with a covering line, before $OPT$ 
crosses a different covering line. 
In Figure \ref{cover}, the hypothetical optimum tour is partitioned
into six blocks $AB, BC, CD, DE, EF, FA$. 
For example, $OPT$ crosses the second vertical line twice before
crossing the third vertical line, and $B$ is the last point of
intersection with the second line. 
Consider the bounding box $Q$ of $OPT_i$, 
the smallest perimeter axis-aligned rectangle which includes
$OPT_i$. Write $w$ for its width and $h$ for its height. 
There are two cases to consider. \\

\noindent {\bf Case 3.1:} $OPT_i$ intersects regions stabbed by two
consecutive covering lines only, $l_1,l_2$ say, at distance $w_1$.
This implies that $OPT_i$ lies between these two covering lines, 
so $w=w_1$. Without loss of generality, $OPT_i$ 
touches the lower side of $Q$, at some point $C$, before it touches 
the upper side of $Q$, at some point $D$.  
Since the horizontal projection of each region is at least
$1/\sqrt{2}$, we have $w_1 \geq 1/\sqrt{2}$.
Let $0 \leq a,b \leq h$ specify the starting and ending points 
$A$ and $B$ of $OPT_i$; see Figure~\ref{l2}. 
Namely, $a$ is the vertical distance between $A$, the start point of 
$OPT_i$, and the lower horizontal side of $Q$, and $b$ is the 
vertical distance between $B$, the end point of $OPT_i$, and the upper
horizontal side of $Q$. By considering the reflections of $OPT_i$ with
respect to the two horizontal sides of $Q$, we get that $|OPT_i|$ is 
bounded from below by the length of the polygonal line $A'CDB'$:
$$ |OPT_i| \geq |A'CDB'| \geq \sqrt{(h+a+b)^2+{w_1}^2}. \eqno(5) $$

\begin{figure}[htbp]
\centerline {\input{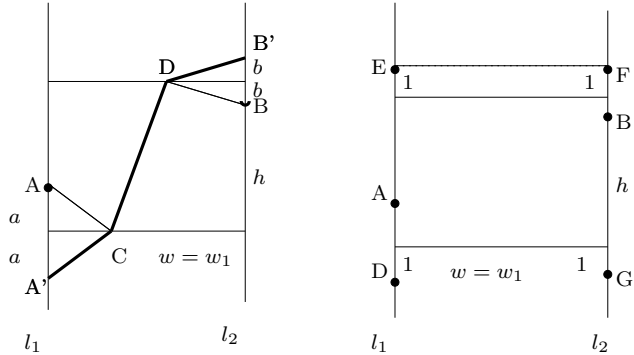}}
\caption{Case 3.1: 2 lines.}
\label{l2}
\end{figure}

We show that there exists a partial tour $T_i$ (a path) of the 
representative points of all the regions $OPT_i$ visits, of length
bounded by $c|OPT_i|$ for some positive constant $c$, where $T_i$
starts at $A$ and ends at $B$. Take $T_i=ADEFGB$, where the points
$D,E,F,G$ are on the lines $l_1,l_2$ at unit distance from the corners
of $Q$; see Figure~\ref{l2}.
$$|T_i| \leq (a+1)+(1+h+1)+(w_1)+(1+h+1)+(1+h-b)
\leq 3(h+a+b)+(w_1+6) $$
$$ \leq 3(h+a+b)+(6\sqrt{2}+1)w_1
\leq \sqrt{3^2+(6\sqrt{2}+1)^2} \sqrt{(h+a+b)^2+{w_1}^2}
< 9.95|OPT_i|. \eqno(6)$$
Put $T_i$ together to get a tour $T$ of the representative points,
having length smaller than $9.95|OPT|$. 
Then, for $\epsilon \leq 0.05$, the $(1+\epsilon)$
approximate tour of representatives, which the algorithm actually
computes, has length at most $10|OPT|$. \\ 

\noindent {\bf Case 3.2:} $OPT_i$ intersects regions stabbed by three
consecutive covering lines only, $l_1$, $l_2$ and $l_3$.
Denote by $w_1$ (resp., $w_2$) the horizontal distance between $l_1$  
and $l_2$ (resp., $l_2$ and $l_3$). 
This implies that $OPT_i$ lies between these three covering lines, 
but it does not touch the rightmost line $l_3$. We can assume that
$OPT_i$  starts at $A$ and ends at $B$, and as in the previous case, 
that it touches the lower side of $Q$, at some point $C$,
before it touches the upper side of $Q$, at some point $D$.
Let $l^{'}$ be the supporting line of the right side of $Q$,
and $w^{'}$ be the horizontal distance between $l_2$ and $l^{'}$.
We have $w_1,w_2 \geq 1/\sqrt{2}$ and $w^{'} \geq \max(0,w_2-1)$.
The case when $OPT_i$ touches the upper side of $Q$ before it touches 
$l^{'}$ is shown is Figure~\ref{l3}; the other case is similar and we
get the same bound on $|OPT_i|$.
Let $a,b,A,B$ be as before. We distinguish two sub-cases:\\

\qquad {\bf Case 3.2.a:} $w_2 \leq 1$. The lower bound on $OPT_i$ we
have used earlier is still valid.
$$ |OPT_i| \geq \sqrt{(h+a+b)^2+{w_1}^2}. $$
We show a partial tour $T_i$ of the 
representative points of all the regions $OPT_i$ visits.
Take $T_i=ADEFBGHIFB$; see Figure~\ref{l3}. Points $E$, $F$ and $I$
(resp., $D$, $G$ and $H$) are on the lines $l_1$, $l_2$ and $l_3$
at unit vertical distance above the upper side (resp., below the lower
side) of $Q$. 
$$ |T_i| \leq (a+1)+(1+h+1)+w_1+(1+h+1)+w_2+(1+h+1)+w_2+(1+b)$$
$$ \leq 3(h+a+b)+(w_1+2w_2+8) \leq 3(h+a+b)+(10\sqrt{2}+1)w_1$$
$$ \leq \sqrt{3^2+(10\sqrt{2}+1)^2} \sqrt{(h+a+b)^2+{w_1}^2}
< 15.45|OPT_i|. \eqno(7) $$
Put $T_i$ together to get a tour $T$ of the representative points,
having length smaller than $15.45|OPT|$. 
Then, for $\epsilon \leq 0.05$, the $(1+\epsilon)$-approximate tour of
representatives has length at most $15.5|OPT|$. \\

\begin{figure}[htbp]
\centerline {\input{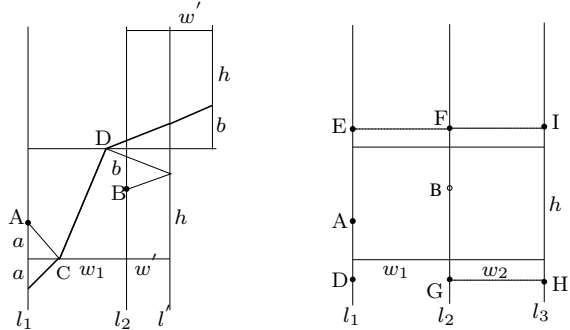}}
\caption{Case 3.2: 3 lines.}
\label{l3}
\end{figure}

\qquad {\bf Case 3.2.b:} $w_2 \geq 1$. We use a different lower 
bound on $OPT_i$.
$$ |OPT_i| \geq \sqrt{(h+a+b)^2+(w_1+2w_2-2)^2},   \eqno(8) $$
which we get by considering the reflections of $OPT_i$ with respect 
to the two horizontal sides of $Q$ and with respect to $l^{'}$;
see Figure~\ref{l3}.
Take $T_i=ADEFBGHIFB$ as in Case~3.2.a.
$$ |T_i| \leq 3(h+a+b)+(w_1+2w_2+8) 
\leq  3(h+a+b)+(10\sqrt{2}+1)(w_1+2w_2-2)$$
$$ \leq \sqrt{3^2+(10\sqrt{2}+1)^2} \sqrt{(h+a+b)^2+(w_1+2w_2-2)^2} 
< 15.45|OPT_i|.  \eqno(9) $$
Put $T_i$ together to get a tour $T$ of the representative points,
having length smaller than $15.45|OPT|$. 
Then, for $\epsilon \leq 0.05$, the $(1+\epsilon)$-approximate tour of
representatives has length at most $15.5|OPT|$. 
The overall approximation ratio of the algorithm, derived from
the Combination Lemma is $15.5+15.5+2=33$.
\end{proof}

\paragraph{Comparison with \cite{ah-aagcs-94}.}
We point out some of the similarities and differences between the
techniques of Arkin and Hassin~\cite{ah-aagcs-94} and our algorithm
and its analysis, which is based on theirs.  First, in
\cite{ah-aagcs-94}, three different algorithms are presented, for
parallel equal-length segments, translates of a convex region, and
translates of an arbitrary connected region.  The second two
algorithms are refinements of the first, and representative points are
chosen differently in each case.  We presented here a single algorithm
that works for all regions of type~(1).  Our consideration of cases 1
and 2 are slightly different from the corresponding cases of
\cite{ah-aagcs-94}, which allows us to handle a larger class of
inputs, namely regions of type (1); e.g. in case 2, \cite{ah-aagcs-94}
distinguishes between the subcases $D \leq 1$ and $D \geq 1$, while we
distinguish between $D \leq 3$ and $D \geq 3$, and the cases are
treated slightly differently.  Finally, the analysis of the algorithms
is based on similar ideas, e.g. to divide the optimal tours into
blocks; our analysis differs in being able to address a single
algorithm and a larger class of inputs.

\subsubsection*{Some Special Cases} \qquad

We note that our calculations of the approximation ratio for
connected regions of a same diameter, give improved bounds
for three cases addressed in \cite{ah-aagcs-94}:
\begin{enumerate}
\item parallel equal segments, 
from $3\sqrt{2}+1$, to $3\sqrt{2}$,
\item translates of a convex region, 
from $\sqrt{3^2+7^2}+1$, to $\sqrt{3^2+7^2}$,
\item translates of a connected region, 
from $\sqrt{3^2+{11}^2}+1$, to $\sqrt{3^2+{11}^2}$.
\end{enumerate}

The reason why these cases can be improved, and the new approximation
ratios have similar expressions with the old ones, is that our
algorithm {\bf A} is similar to the algorithms in
\cite{ah-aagcs-94} for those cases. 

We exemplify here the case of parallel equal segments
and omit details for the rest.
The algorithm computes a greedy cover of the segments, assumed to be of unit
length, using vertical lines. Then it proceeds according to the cardinality
of the cover. Cases 1 and 2 are treated in \cite{ah-aagcs-94}, and the
ratio is $\sqrt{2}$; this is not the bottleneck case.
In Case 1 (one covering line), an optimal tour is easy to obtain.
In Case 2 (two covering lines), a smallest aligned rectangle which 
touches all segments is the output tour.
In Case 3 (three or more covering lines) the algorithm computes an
almost optimal tour of the representative points, as algorithm {\bf A}
does. Its analysis is divided into two sub-cases, as in the proof of 
Theorem ~\protect{\ref{thm:main-approx}}. 
In the first sub-case ($OPT_i$ intersects segments covered by two
consecutive covering lines only), the lower bound in Equation~(5) 
on $|OPT_i|$ is valid. The upper bound in Equation~(6) on $|T_i|$ is 
adjusted by dropping the constant term equal to $+6$. Then
$$ |T_i| \leq 3(h+a+b)+w_1 \leq \sqrt{3^2+1^2}|OPT_i| = \sqrt{10}|OPT_i|. $$
In the second sub-case ($OPT_i$ intersects segments covered by three
consecutive covering lines only), the lower bound in Equation~(8)  
on $|OPT_i|$ is valid. The upper bound  in Equation~(9) on $|T_i|$ is 
adjusted by dropping the constant term equal to $+8$. We also have that 
$w_1, w_2 >1$. Then
$$ |T_i| < 3(h+a+b)+w_1+2w_2 \leq 3(h+a+b)+3(w_1+2w_2-2) \leq
3\sqrt{2}|OPT_i|. $$
The overall approximation ratio obtained for parallel equal segments
is $3\sqrt{2}$.

\section {Lines}
\label{sec:lines}

We consider now the case in which the $n$ regions defining the TSPN
instance are infinite straight lines in the plane.  It is interesting
that this case allows for an exact solution in polynomial time:

\begin{proposition} \label{P3}
Given a set $L$ of $n$ infinite straight lines in the plane, a shortest 
tour that visits $L$ can be computed in polynomial time.
\end{proposition}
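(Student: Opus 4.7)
The plan is to reduce the problem to a polynomial-size convex (in fact, linear) optimization by exploiting the structure of the optimum. First, I argue the optimal tour $T^*$ may be assumed convex: for any tour $T$ visiting every line, the boundary of $\mathrm{conv}(T)$ has length at most $|T|$ (the convex hull of a closed curve has no greater perimeter), and it still visits every line of $L$, because any infinite line meeting the bounded region $\mathrm{conv}(T)$ must cross its boundary.

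Next, I would argue that the cyclic order in which the optimal convex polygon meets the lines is determined (up to rotation) by the angular directions of the lines. As one traverses a convex polygon's boundary counterclockwise, its outward normal rotates monotonically through $[0,2\pi)$; each line $\ell_i$ with direction angle $\theta_i \in [0,\pi)$ can be met only on a portion whose boundary normal lies near $\theta_i + \pi/2$ or $\theta_i - \pi/2$. Sorting the $2n$ candidate angles $\{\theta_i \pm \pi/2\}$ fixes a cyclic order of up to $2n$ visits, independent of the polygon itself.

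Third, a local perturbation argument shows that in the optimum each edge is either parallel to one of the given lines (so its outward normal lies in $\{\theta_i \pm \pi/2\}$) or collapses to a vertex lying on a given line (possibly at an intersection of two). Parameterizing the polygon by the signed distances $d_j$ from a fixed origin to the $O(n)$ candidate supporting lines, each vertex is a linear function of two consecutive $d_j$'s; hence each edge length, and the perimeter, is linear in the $d_j$'s, and the constraint ``$\ell_i$ meets the polygon'' is also linear (the signed distance of $\ell_i$ from the origin must lie between $-d_{j'}$ and $d_j$ for the two indices whose normals are parallel to $\ell_i$). The result is a polynomial-size linear program in $O(n)$ variables and $O(n)$ constraints, solvable in polynomial time.

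The main obstacle is handling the degenerate combinatorial cases — where the optimal ``polygon'' collapses to a segment (as happens when the lines are nearly parallel), or to a triangle, or to a low-complexity shape whose vertices coincide at line intersections. A finite enumeration over the combinatorial types (which adjacent $d_j$'s coincide, and which vertices are pinned to arrangement vertices) suffices; each type yields its own LP, and the minimum over all LPs gives the shortest tour in polynomial time.
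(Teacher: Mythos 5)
Your first two steps are fine (convexity of the optimum, and the fact that tangency points appear around the boundary in the cyclic order of their outward normals), but the structural claim in your third step is false, and the whole LP formulation rests on it. The edges of the optimal tour are in general \emph{not} parallel to any of the input lines, nor degenerate. Already for $n=3$ lines bounding a non-equilateral acute triangle, the optimal tour is the pedal (orthic) triangle --- this is Fagnano's problem, and it is exactly the fact used in Case~1 of Section~\ref{sec:lines} of this paper --- and the sides of the orthic triangle are not parallel to the sides of the original triangle. (Concretely, for the acute triangle with vertices $(0,0)$, $(4,0)$, $(1,2)$, the altitude feet are $(1,0)$, $(16/13,24/13)$, $(4/5,8/5)$, and the edge joining $(1,0)$ to $(4/5,8/5)$ has direction $(-1,8)$, parallel to none of the three lines.) The correct first-order optimality condition obtained by perturbing a vertex $p_i$ along its line $\ell_i$ is a \emph{reflection} (equal-angle, billiard-type) condition on the two incident edges, not a condition forcing edge normals into the finite set $\{\theta_i\pm\pi/2\}$. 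Consequently, restricting to polygons of the form $\bigcap_j\{x:\langle \nu_j,x\rangle\leq d_j\}$ with $\nu_j\in\{\theta_i\pm\pi/2\}$ strictly excludes the true optimum: the smallest such polygon containing the orthic triangle has strictly larger perimeter, so the minimum over your family of LPs is not $|OPT|$. Note also that even with the correct structure (minimize the perimeter of ${\rm conv}(p_1,\ldots,p_n)$ over $p_i\in\ell_i$), the objective involves sums of Euclidean norms, so one does not get a linear program; the polynomial-time claim would need a separate argument.

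The paper avoids all of this combinatorial and algebraic structure by a reduction: enclose the arrangement of $L$ in a rectangle $B$, attach a narrow spike along each line where it exits $B$, and observe that a tour visits every line of $L$ if and only if it sees every point of the resulting simple polygon; the known polynomial-time watchman-route algorithms then give an exact optimum. If you want to pursue your direct approach, the reflection condition is the right lemma, and it leads to the same unfolding technique that underlies those watchman-route algorithms rather than to a linear program.
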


\begin{proof} 
We convert the problem to an instance of the {\em watchman route
problem} in a simple polygon.
A {\em watchman route} in a polygon is a tour inside the polygon, such
that every point in the polygon is visible from some point along the
tour. The watchman route problem asks for a watchman route of minimum
length \cite{AGS00}. The problem is known to have an
$O(n^5)$ algorithm; see \cite{t-01}, as well as \cite{cjn-99}. 
Let $B$ be a rectangle that contains all of the vertices of the
arrangement of $L$.  At any one of the two points of intersection
between a line $l_i\in L$ and the boundary of $B$, we extend a very
narrow ``spike'' outward from that point, along $l_i$, for some fixed
distance.  Let $P$ be the simple polygon having $3n+4$ vertices that
is the union of $B$ and these $n$ spikes, as illustrated in
Figure~\ref{watch}.  We make the observation that a tour $T$ visits
all of the lines in $L$ if and only if it sees all of the polygon $P$,
as required by the watchman route problem.  Consequently, we can solve
the TSPN on the set of lines $L$ by solving the watchman route problem
on polygon~$P$.  
\end{proof}

\begin{figure}[htbp]
\centerline{\epsfysize=2.2in \epsffile{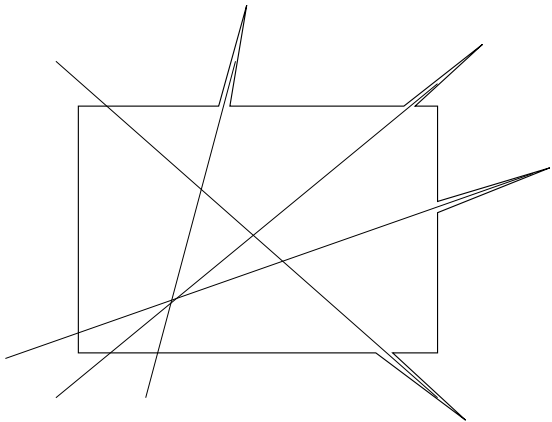}}
\caption{Proof of Proposition \ref{P3}.}
\label{watch}
\end{figure}
 
Given the high running time of the watchman route algorithms, it is of
interest to consider more efficient algorithms that may approximate
the optimal solution.  To this end, we now present a {\em linear-time}
constant-factor approximation algorithm.  

Let $L=\{l_1,\ldots,l_n\}$ be the input set of $n$ lines. A {\em
minimum touching circle (disk)} is a circle of minimum radius which
intersects all of the lines in $L$.  The algorithm computes
and outputs $C_L$, a minimum touching circle for $L$.  We will show
that this provides a tour of length at most $\frac{\pi}{2}|OPT|$,
where, as usual, $OPT$ denotes an optimal tour.  First we argue about
the approximation ratio, and leave for later the presentation of the
algorithm. To start, we assume for simplicity that no two lines are parallel,
though this assumption will be later removed.

\begin{observation} \label {O1}
The optimal tour of a set of lines is a (possibly degenerate) convex
polygon $P$. 
\end{observation}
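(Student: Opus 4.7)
The plan is to establish the observation in two stages: first reduce to the case where the optimal tour is the boundary of a convex region, and then show that the boundary of an optimum convex region must be a (possibly degenerate) polygon.

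For the first stage, let $T^*$ be an optimal tour and let $C^* = \mathrm{conv}(T^*)$ be its convex hull.  I would invoke two classical facts.  First, the perimeter of the convex hull of a planar closed curve is at most the length of the curve, so $|\partial C^*| \le |T^*|$.  Second, since each line $l_i$ meets $T^* \subseteq C^*$ and $l_i$ is infinite while $C^*$ is bounded, $l_i$ must also cross $\partial C^*$.  Hence $\partial C^*$ is itself a valid tour of $L$ of length at most $|T^*|$, so by optimality we may assume that the optimal tour is $\partial C^*$ for some convex region $C^*$.

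For the second stage, I would argue by a local modification.  For each $i$, the intersection $X_i = l_i \cap \partial C^*$ is nonempty and consists of at most two connected components (two transversal crossing points; a single tangency point; or a flat segment of $\partial C^*$ lying on $l_i$).  Set $X = \bigcup_{i=1}^n X_i$, so $X$ has at most $2n$ connected components on $\partial C^*$, and $\partial C^* \setminus X$ decomposes into at most $2n$ open arcs.  The central claim is that every such arc is a straight segment: if an arc $\gamma \subseteq \partial C^* \setminus X$ were not straight, replacing $\gamma$ by the chord joining its endpoints would yield a convex region $C' \subsetneq C^*$ of strictly smaller perimeter; and because $\gamma$ by construction contains no points of $X$, each line $l_i$ has its entire contact set $X_i$ preserved on $\partial C'$, so $C'$ remains a valid tour---contradicting optimality.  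Since each $X_i$ is itself a point or a segment along $l_i$, we conclude that $\partial C^*$ is built from finitely many straight pieces, i.e.\ it is a (possibly degenerate) convex polygon.

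The main obstacle is making the second stage airtight: one has to be sure that the modification from $C^*$ to $C'$ does not destroy any line-visit.  This is handled by taking $\gamma$ to be a full connected component of $\partial C^* \setminus X$, so that no contact point of any line lies in the interior of $\gamma$ and all line intersections survive the replacement.  Degenerate cases (the optimum being a point when the lines are concurrent, or a segment traversed twice when two coverage directions suffice) are precisely the ``possibly degenerate'' instances allowed by the statement and require no separate argument.
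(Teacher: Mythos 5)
Your proposal is correct, and it is in fact more complete than the paper's own proof. The paper disposes of polygonality in one sentence (``it is easy to see that an optimal tour must be polygonal'') and then proves convexity by passing to the convex hull, which is shorter and still meets every line. Your first stage is exactly that convex-hull argument, just run in the opposite logical order (convexity first, polygonality second). Where you genuinely diverge is the second stage: you supply an actual proof of the finiteness/straightness claim that the paper waves off, by decomposing $\partial C^*$ into the contact set $X=\bigcup_i (l_i\cap\partial C^*)$ (at most $2n$ components, each a point or segment by convexity) and the complementary open arcs, and then shortcutting any non-straight arc by its chord. The key soundness points all check out: the chord cut $C'=C^*\cap H$ keeps $\partial C^*\setminus\gamma$ on the new boundary, so every $X_i$ survives and every line is still visited, while the perimeter strictly decreases. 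One could quibble that a component of $X$ may itself be a short polygonal chain (segments of two different lines meeting at a boundary vertex) rather than a single point or segment, but that only affects the bookkeeping of the vertex count, not the conclusion that $\partial C^*$ is a finite union of straight pieces. What your route buys is a rigorous justification of the step the paper leaves implicit; what the paper's route buys is brevity, at the cost of leaving the polygonality of the optimum as an unproved (though true and intuitively clear) assertion.
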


\begin{proof}
It is easy to see that an optimal tour must be polygonal,
consisting of a finite union of straight line segments.  If an optimal
tour were a non-convex polygon $P$, we obtain a contradiction to its
optimality since the boundary of the convex hull of $P$, which is
shorter than $P$, also visits all of the lines $L$, since $P$ does.
\end{proof}

\begin{observation} \label {O2}
$C_L$ is determined by $3$ lines in $L$, i.e. it is the inscribed circle 
in the triangle $\Delta$ formed by these $3$ lines. 
\end{observation}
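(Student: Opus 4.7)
The plan is to show, by a standard perturbation (KKT-style) argument, that at the minimum touching circle $C_L$ with center $c^*$ and radius $r^*$, at least three lines of $L$ are at distance exactly $r^*$, and that the outward normals at the tangent points must surround $c^*$ so that $c^*$ is inscribed in the triangle formed by three of them.

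First, I would note that for each line $l_i$, the function $c \mapsto d(c,l_i)$ is the absolute value of an affine function, hence convex and piecewise linear, and so $\Phi(c) = \max_i d(c,l_i)$ is a convex function whose minimum $r^* = \Phi(c^*)$ exists. Next, I would argue that at least three lines are ``tight'', i.e., at distance exactly $r^*$ from $c^*$. If only one line $l_i$ were tight, a small displacement of $c^*$ toward $l_i$ strictly decreases $d(c,l_i)$ while keeping the other (strictly smaller) distances below $r^*$ by continuity, contradicting the optimality of $c^*$. If exactly two lines $l_i, l_j$ were tight, let $u_i, u_j$ be the unit vectors from $c^*$ toward the feet of perpendiculars on these lines. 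Since no two lines are parallel, $u_i \neq \pm u_j$, so there exists a direction $v$ with $v \cdot u_i > 0$ and $v \cdot u_j > 0$; displacing $c^*$ in direction $v$ strictly decreases both $d(c,l_i)$ and $d(c,l_j)$, again a contradiction.

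Second, given the set of tight lines, let $u_1,\ldots,u_k$ (with $k \geq 3$) be the corresponding unit outward normals from $c^*$. A first-order calculation shows that moving $c^* \to c^* + \varepsilon v$ changes the distance to the tight line with normal $u_j$ by $-\varepsilon\, v\cdot u_j + O(\varepsilon^2)$. Therefore, optimality of $c^*$ forces that no vector $v$ satisfies $v\cdot u_j > 0$ for all $j$; equivalently, the origin lies in the convex hull of $\{u_1,\ldots,u_k\}$.

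Third, by Carath\'eodory's theorem in $\mathbb{R}^2$, the origin lies in the convex hull of at most three of the $u_j$'s, say $u_{j_1}, u_{j_2}, u_{j_3}$. The three corresponding tight lines pairwise intersect (no two are parallel), so they bound a triangle $\Delta$. The condition that $0$ lies in the convex hull of the three outward normals is exactly the condition that $c^*$ lies in the interior of $\Delta$ and is equidistant from all three sides, i.e., $c^*$ is the incenter of $\Delta$ and $r^*$ the inradius. Hence $C_L$ is the inscribed circle of $\Delta$.

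The main obstacle is the translation between the optimality condition (no improving direction) and the geometric statement (origin in the convex hull of the outward normals, hence $c^*$ inside the triangle); once this is made precise via the first-order expansion above, the rest follows from Carath\'eodory's theorem and the elementary characterization of the incenter.
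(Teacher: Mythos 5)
Your proposal is correct. Note, however, that the paper offers no proof of this observation at all: it is stated as a bare fact, implicitly justified by the later reformulation of the minimum touching circle as a $3$-variable linear program (so that an optimal basic solution is determined by $3$ constraints, i.e., $3$ lines). Your perturbation argument supplies the missing reasoning and does so cleanly: convexity and coercivity of $c\mapsto\max_i d(c,l_i)$ give existence, the no-improving-direction condition forces at least three tight lines with the origin in the convex hull of their outward unit normals, and Carath\'eodory plus the exclusion of antipodal normals (no two lines parallel) reduces to exactly three lines whose normals positively span the plane, which characterizes the incenter rather than an excenter. Two small caveats, neither fatal: (i) your argument tacitly assumes $r^*>0$ (if all lines are concurrent the normals $u_i$ are undefined and the ``circle'' degenerates to a point), and (ii) the paper itself later relaxes the no-parallel assumption and handles the ``generalized triangle'' of two parallel lines plus a transversal as a separate case --- your proof correctly identifies that this is exactly the configuration where the two-tight-lines and antipodal-normals escapes fail, so it is consistent with, and explains, that case split. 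If you want the observation in its full strength (``$C_L$ is \emph{determined} by the $3$ lines''), you should add the one-line remark that the incircle is also the minimum touching circle of those three lines alone, since each exradius exceeds the inradius.
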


We distinguish two cases:\\

\noindent {\bf Case 1.} $\Delta$ is an acute triangle. It is well known
that for an acute triangle, the minimum perimeter inscribed triangle
(having a vertex on each side of the triangle) is its {\em pedal}
triangle, whose vertices are the feet of the altitudes of the given
triangle (see e.g. \cite{RT57}).
So in this case, the optimal tour $OPT_{\Delta}$, which visits the $3$ 
lines of $\Delta$ is its pedal triangle; we denote by $y$ its perimeter. 
Clearly $y$ is a lower bound on $|OPT|$: $|OPT| \geq |OPT_{\Delta}|=y $.
Denote by $s$ the semi-perimeter of $\Delta$, by $R$ the radius of 
its circumscribed circle, and by $r$ the radius of its inscribed circle.

\begin{fact} \label{F2}
For an acute triangle $\Delta$, $s>2R$.
\end{fact}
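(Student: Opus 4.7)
The plan is to reduce the inequality to a purely trigonometric statement via the extended Law of Sines. Since $a = 2R\sin A$, $b = 2R \sin B$, $c = 2R \sin C$, the semi-perimeter equals $s = R(\sin A + \sin B + \sin C)$, so the claim $s > 2R$ is equivalent to
$$\sin A + \sin B + \sin C > 2$$
whenever $A + B + C = \pi$ and $A, B, C \in (0, \pi/2)$ (the acute condition).

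To establish this inequality, I would single out the largest angle, call it $C$, so that $C \in [\pi/3, \pi/2)$. Applying the sum-to-product identity,
$$\sin A + \sin B \;=\; 2\cos\!\Big(\tfrac{A+B}{2}\Big)\cos\!\Big(\tfrac{A-B}{2}\Big) \;=\; 2\cos(C/2)\,\cos\!\Big(\tfrac{A-B}{2}\Big).$$
The constraints $A + B = \pi - C$ together with $A, B < \pi/2$ force $|A-B| < C$ strictly (otherwise one of $A$, $B$ would equal $\pi/2$, violating acuteness), which in turn gives $\cos((A-B)/2) > \cos(C/2)$. Hence $\sin A + \sin B > 2\cos^2(C/2) = 1 + \cos C$.

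Adding $\sin C$ to both sides,
$$\sin A + \sin B + \sin C \;>\; 1 + \sin C + \cos C \;=\; 1 + \sqrt{2}\,\sin(C + \pi/4),$$
and on the interval $C \in (0, \pi/2)$ we have $\sin(C + \pi/4) > 1/\sqrt{2}$, so the right-hand side exceeds $2$. This completes the desired inequality $s > 2R$.

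The proof is not so much obstructed as requires careful bookkeeping of strictness: every inequality in the chain must be strict, and each strict inequality traces back to the fact that \emph{every} angle is strictly less than $\pi/2$ (so the configuration is not that of a right or degenerate triangle). An alternative route via the identity $\sin A + \sin B + \sin C = 4\cos(A/2)\cos(B/2)\cos(C/2)$ also looks promising, but the naive bound $\cos(A/2) > \cos(\pi/4)$ gives only $4(\sqrt{2}/2)^3 = \sqrt{2} < 2$, so a sharper symmetric bound would be needed; the approach above avoids this by exploiting the sum-to-product structure directly.
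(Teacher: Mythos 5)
Your reduction is exactly the paper's: by the extended law of sines, $s>2R$ is equivalent to $\sin A+\sin B+\sin C>2$ for an acute triangle. The paper stops at that point and cites the inequality as well known (Bottema et al., \emph{Geometric Inequalities}, p.~18), whereas you go on to prove it, so your write-up is strictly more self-contained. Your argument for the trigonometric inequality is correct: the observation that acuteness forces $|A-B|<C$ (equivalent to $\max(A,B)<\pi/2$) is the right way to get a strict lower bound $\cos\bigl(\tfrac{A-B}{2}\bigr)>\cos(C/2)$, hence $\sin A+\sin B>2\cos^2(C/2)=1+\cos C$, and then $1+\sin C+\cos C>2$ on $(0,\pi/2)$ finishes it. One small slip to fix: the sum-to-product identity is $\sin A+\sin B=2\sin\bigl(\tfrac{A+B}{2}\bigr)\cos\bigl(\tfrac{A-B}{2}\bigr)$, not $2\cos\bigl(\tfrac{A+B}{2}\bigr)\cos\bigl(\tfrac{A-B}{2}\bigr)$; since $\sin\bigl(\tfrac{A+B}{2}\bigr)=\cos(C/2)$, your final expression $2\cos(C/2)\cos\bigl(\tfrac{A-B}{2}\bigr)$ is nevertheless the correct one, so the rest of the chain stands. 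Your closing remark about the product form $4\cos(A/2)\cos(B/2)\cos(C/2)$ is also apt: the naive termwise bound is indeed too weak, which is why the sum-to-product route (or a citation, as in the paper) is preferable.
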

\begin{proof}
If $A,B,C$ are the angles of $\Delta$, this is equivalent to
$$ R(\sin{A}+\sin{B}+\sin{C}) > 2R. $$
After simplification with $R$, this is a well known inequality in
the geometry of an acute triangle (\cite{bdjmv-gi-69}, page 18).
\end{proof}

\begin{fact} \label{F3}
For an acute triangle $\Delta$, $y=\frac{2rs}{R}$.
\end{fact}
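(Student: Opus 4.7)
The plan is to compute the perimeter $y$ of the orthic (pedal) triangle directly via the standard formula for its side lengths, and then rewrite the resulting trigonometric expression in terms of $r$, $s$, and $R$ by invoking two elementary identities.

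First I would recall the well-known fact that in a triangle with sides $a,b,c$ opposite to angles $A,B,C$, the three sides of the orthic triangle have lengths $a\cos A$, $b\cos B$, $c\cos C$; this is a classical consequence of the observation that each foot of an altitude lies on the circle having the opposite side of $\Delta$ as a diameter, so that two of the feet together with a vertex form a right triangle similar to $\Delta$. Applying the law of sines $a=2R\sin A$ (and similarly for $b,c$), this gives
\[
y = 2R(\sin A\cos A+\sin B\cos B+\sin C\cos C)=R(\sin 2A+\sin 2B+\sin 2C).
\]

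Next I would use the identity
\[
\sin 2A+\sin 2B+\sin 2C = 4\sin A\sin B\sin C,
\]
valid whenever $A+B+C=\pi$, which follows from routine sum-to-product manipulations. This reduces the perimeter formula to $y=4R\sin A\sin B\sin C$.

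Finally I would express the area $[\Delta]$ in two ways and equate them. On one hand, using $a=2R\sin A$ in $[\Delta]=\tfrac12 ab\sin C$ (and $b=2R\sin B$) yields $[\Delta]=2R^2\sin A\sin B\sin C$. On the other hand, $[\Delta]=rs$. Hence $\sin A\sin B\sin C=\frac{rs}{2R^2}$, and substituting into the previous display gives $y=\frac{2rs}{R}$, as claimed. The main step requiring care is simply the identification of the orthic-triangle side lengths; once that is in hand, the rest is straightforward trigonometric bookkeeping, and acuteness of $\Delta$ is only needed to ensure that the feet of the altitudes lie on the sides themselves so that the orthic triangle is indeed the minimum-perimeter inscribed triangle discussed above.
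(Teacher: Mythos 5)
Your derivation is correct, and it is worth noting that the paper itself does not prove Fact~3 at all: it simply cites Bottema et al.\ (\emph{Geometric Inequalities}, p.~86) for the identity. So your proof supplies a self-contained argument where the paper defers to a reference. Each step checks out: the orthic triangle of an acute triangle has side lengths $a\cos A$, $b\cos B$, $c\cos C$ (the feet of the altitudes from $B$ and $C$ both see $BC$ at a right angle, hence lie on the circle with diameter $BC$, and the triangle they form with $A$ is similar to $\Delta$ with ratio $\cos A$); the law of sines turns the perimeter into $R(\sin 2A+\sin 2B+\sin 2C)$; the identity $\sin 2A+\sin 2B+\sin 2C=4\sin A\sin B\sin C$ holds when $A+B+C=\pi$; and equating $[\Delta]=2R^2\sin A\sin B\sin C$ with $[\Delta]=rs$ closes the computation to give $y=2rs/R$. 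You are also right to flag where acuteness enters: it guarantees all three feet lie in the interiors of the sides, so the orthic triangle is genuinely the inscribed triangle whose perimeter is being compared against $|OPT_\Delta|$ in Case~1 of Section~\ref{sec:lines}, and it keeps all three cosines positive so that $a\cos A+b\cos B+c\cos C$ really is the perimeter rather than a signed sum.
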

\noindent A proof of this equality can be found in \cite{bdjmv-gi-69}, 
page 86. 

\begin{claim} \label{C1} 
For an acute triangle $\Delta$, $r<\frac{y}{4}$.
\end{claim}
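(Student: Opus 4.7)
The plan is to obtain this claim as an immediate algebraic consequence of the two facts just stated, with no new geometry required. First I would rewrite the target inequality $r < y/4$ in the form $y > 4r$, so that Fact~\ref{F3}, namely $y = 2rs/R$, converts it into an equivalent statement purely about the side-lengths and the two characteristic radii of $\Delta$.

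Next, I would substitute: the inequality $2rs/R > 4r$ simplifies (dividing by the positive quantity $2r$) to $s/R > 2$, i.e.\ $s > 2R$. This is exactly the content of Fact~\ref{F2} for an acute triangle. So the proof is a two-line chain: apply Fact~\ref{F3} to express $y/4$ as $rs/(2R)$, then invoke Fact~\ref{F2} to conclude $rs/(2R) > r$.

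There is essentially no obstacle here, because both nontrivial ingredients are already packaged as Facts~\ref{F2} and \ref{F3}; the only thing to be careful about is that the division by $r$ is legitimate (the triangle $\Delta$ is nondegenerate, since it is formed by three pairwise non-parallel lines, so $r > 0$) and that the inequality in Fact~\ref{F2} is strict, so the resulting inequality $r < y/4$ is strict as well. If desired, the proof can be written as the single displayed chain
\[
\frac{y}{4} \;=\; \frac{1}{4}\cdot\frac{2rs}{R} \;=\; \frac{rs}{2R} \;>\; \frac{r\cdot 2R}{2R} \;=\; r,
\]
using Fact~\ref{F3} for the first equality and Fact~\ref{F2} for the inequality.
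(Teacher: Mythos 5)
Your proposal is correct and matches the paper's proof in substance: both combine Fact~\ref{F2} ($s>2R$) with Fact~\ref{F3} ($y=2rs/R$) in a one-line algebraic chain, the paper writing it as $r=\frac{Ry}{2s}<\frac{Ry}{4R}=\frac{y}{4}$. Your added remark about $r>0$ and strictness is a harmless extra precaution.
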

\begin{proof} 
Putting the above together, we get 
$ r=\frac{Ry}{2s}<\frac{Ry}{4R}=\frac{y}{4}$. 
\end{proof}

\noindent The length of the output tour is bounded as follows

$$ |C_L|=2\pi r < 2\pi\frac{y}{4}= \frac{\pi}{2}y \leq \frac{\pi}{2}|OPT|. $$ 

\noindent {\bf Case 2.} $\Delta$ is an obtuse triangle. In this case,
$|OPT_{\Delta}|=2h$, where $h$ is the length of the altitude corresponding
to the obtuse angle, say $A$. 
Clearly $2h$ is a lower bound on $|OPT|$: $|OPT| \geq |OPT_{\Delta}|=2h $.

\begin{fact} \label{F4}
For any triangle $\Delta$, $h>2r$.
\end{fact}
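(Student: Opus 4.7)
The plan is to prove the inequality by combining two standard expressions for the area of a triangle and then invoking the (strict) triangle inequality.

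First, I would let $a$ denote the length of the side opposite to the obtuse vertex $A$, so that $h$ is the altitude to side $a$, and let $b,c$ be the remaining side lengths, $s=(a+b+c)/2$ the semi-perimeter, and $[\Delta]$ the area of $\Delta$. The two area formulas I would use are
\[
[\Delta] \;=\; \tfrac{1}{2}\,a\,h \qquad \text{and} \qquad [\Delta] \;=\; r\,s,
\]
which together yield $h = 2rs/a$.

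Next, I would compare $h$ and $2r$ by dividing: $h/(2r) = s/a$. By the triangle inequality, $b+c>a$, and so
\[
2s \;=\; a+b+c \;>\; 2a,
\]
which gives $s>a$, i.e., $s/a>1$. Therefore $h>2r$, as claimed. Note that this argument does not in fact use that $A$ is obtuse, so $h>2r$ holds for the altitude corresponding to any vertex of any triangle; the obtuse case is the only one needed in the surrounding proof.

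There is essentially no obstacle here: the entire argument is a one-line manipulation of $[\Delta]=\tfrac12 a h = rs$ followed by the triangle inequality. The only minor care required is to ensure the inequality is strict, which follows because the triangle inequality $b+c>a$ is strict for a non-degenerate triangle $\Delta$.
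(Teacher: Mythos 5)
Your proof is correct and is essentially identical to the paper's: both equate the two area formulas $\tfrac12 ah$ and $rs$ and then apply the strict triangle inequality $b+c>a$ to conclude $h/r=(a+b+c)/a>2$. No further comment is needed.
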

\begin{proof}
Let $S$ denote the area of the triangle with side lengths
$a$, $b$, and $c$.  Then, we know that $S=ah/2$, by our definition of
the altitude $h$.  We also know from elementary geometry that
$S=(a+b+c)r/2$, recalling that $r$ is the radius of the inscribed circle
of the triangle.  Thus, $ah= (a+b+c)r$, from which we obtain
$$\frac{h}{r}=\frac{a+b+c}{a}>\frac{a+a}{a}=2,$$
using the triangle inequality.
\end{proof}

\noindent Using this inequality, we have $ |OPT_{\Delta}|=2h > 4r $. 
The length of the output tour is bounded as follows
$$ |C_L|=2\pi r =\frac{\pi}{2}(4r) < \frac{\pi}{2}|OPT_{\Delta}|
\leq \frac{\pi}{2}|OPT|. $$ 
Thus, in both cases, the approximation ratio is 
$\frac{\pi}{2} \leq 1.58$.

When $C_L$ is determined by 3 lines in $L$ of which two are parallel
at distance $h$, we say that they form a generalized triangle
$\Delta$. 
Clearly $2h$ is a lower bound on $|OPT|$: $|OPT| \geq |OPT_{\Delta}|=2h $.
We also have $r=h/2$, where $r$ is the radius of $C_L$, thus
$$ |C_L|=2\pi r = \pi h \leq \frac{\pi}{2}|OPT|. $$ 

We now describe the algorithm for computing $C_L$.
The distance $d(p,l)$ from a point $p$ of coordinates $(x_0,y_0)$ 
to a line $l$ of equation $ax+by+c=0$ is 
$$ d(p,l)=\frac{|ax_0+by_0+c|}{\sqrt{a^2+b^2}}.$$
Let the lines in $L$ have the equations
$$ (l_i:) \ \ \ \ \ \ a_ix+b_iy+c_i=0, \ i=1,\ldots,n. $$
Finding a minimum touching circle amounts to finding 
the center coordinates $(x,y)$, and a minimum radius $z$, s.t. 
$$ \frac{|a_ix+b_iy+c_i|}{\sqrt{a_i^2+b_i^2}} \leq z, \ \ i=1,\ldots,n. $$
This is equivalent to solving the following $3$-dimensional linear
program 
\smallskip
$$ \min z {\rm \ \ subject \ to \ \ } \left \{
\frac{a_ix+b_iy+c_i}{\sqrt{a_i^2+b_i^2}} \leq  z, \ \ 
\frac{-a_ix-b_iy-c_i}{\sqrt{a_i^2+b_i^2}} \leq  z, \ \ 
i=1,\ldots,n \right \}, $$
\noindent which takes $O(n)$ time~\cite{m-lpltw-84}.
Consequently, we have proved
\begin{theorem} \label{P4}
Given a set $L$ of $n$ (infinite straight) lines in the plane, 
a $\frac{\pi}{2}$-approximate tour that visits $L$ can be computed in  
$O(n)$ time.
\end{theorem}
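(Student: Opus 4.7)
The plan is to show that the boundary of a \emph{minimum touching disk} $C_L$ --- a disk of smallest radius whose boundary intersects every line in $L$ --- is a valid tour of length at most $\frac{\pi}{2}|OPT|$, and that $C_L$ can be found in linear time via a low-dimensional linear program.

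First, I would establish the structural setup: by definition, the circle $C_L$ meets every $l_i\in L$, hence its boundary is a feasible tour. A standard extremal argument shows $C_L$ is determined by at most three lines of $L$, so it is the incircle of some (possibly generalized) triangle $\Delta$ formed by these defining lines. Since the optimal tour $OPT$ must also visit the defining lines of $\Delta$, we have $|OPT|\geq |OPT_\Delta|$, and it is enough to compare $|C_L|=2\pi r$ with $|OPT_\Delta|$, where $r$ is the inradius of~$\Delta$.

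Next, I would split into cases according to the shape of $\Delta$, exactly along the lines of Observations~\ref{O1},~\ref{O2}, Facts~\ref{F2}--\ref{F4} and Claim~\ref{C1} established above. If $\Delta$ is acute, then $OPT_\Delta$ is its pedal triangle of perimeter $y=2rs/R$, and combining this with the acute-triangle inequality $s>2R$ (Fact~\ref{F2}) yields $r<y/4$, so $|C_L|=2\pi r<\frac{\pi}{2}y\leq \frac{\pi}{2}|OPT|$. If $\Delta$ is obtuse, then $|OPT_\Delta|=2h$ where $h$ is the altitude to the obtuse vertex, and Fact~\ref{F4} gives $h>2r$, hence $|C_L|=2\pi r<\frac{\pi}{2}(2h)\leq\frac{\pi}{2}|OPT|$. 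Finally, if two of the three determining lines are parallel at distance $h$ (the generalized case), then $r=h/2$ and $|OPT_\Delta|=2h$, which gives $|C_L|=\pi h\leq \frac{\pi}{2}|OPT|$ directly.

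For the algorithmic part, I would cast the computation of $C_L$ as the $3$-variable linear program that minimizes $z$ subject to the signed-distance constraints $\pm(a_ix+b_iy+c_i)/\sqrt{a_i^2+b_i^2}\leq z$ for each line $l_i:a_ix+b_iy+c_i=0$. Since the number of variables is fixed at three, Megiddo's algorithm~\cite{m-lpltw-84} solves this LP in $O(n)$ time, and the optimal $(x,y,z)$ encodes the center and radius of $C_L$. The main obstacle in this plan is the acute-triangle case, which depends on the two non-trivial classical facts $y=2rs/R$ and $s>2R$; once those are in hand, the $\pi/2$ ratio falls out uniformly across all cases and the running time follows immediately from linear programming in constant dimension.
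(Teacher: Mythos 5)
Your proposal is correct and follows essentially the same route as the paper: the minimum touching circle as the output tour, the case analysis on the (possibly generalized) triangle determined by the three defining lines using the pedal-triangle perimeter $y=2rs/R$ with $s>2R$ in the acute case and the altitude bound $h>2r$ in the obtuse case, and Megiddo's fixed-dimension linear programming for the $O(n)$ running time. No substantive differences to report.
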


\section{Conclusion}
\label{sec:conclusion}

Several open problems remain, including
\begin{description}
\item[(1)] Is there a constant-factor approximation algorithm for arbitrary
connected regions in the plane?  What if the regions are disconnected? 
(giving us a geometric version of a ``one-of-a-set TSP'')
\item[(2)] What approximation bounds can be obtained in higher dimensions?
Our packing arguments for disjoint disks lift to higher dimensions, but our
other methods do not readily generalize.

A particularly intriguing special case is the generalization of the case of
infinite straight lines: What can be said in 3-space for the TSPN on a set 
of lines or of planes?

\item[(3)] Is there a PTAS for general pairwise-disjoint regions in the plane?
\end{description}

\subsection*{Acknowledgements}

We thank Estie Arkin and Michael Bender for several useful discussions
on the TSPN problem. 
We thank the anonymous referees for their detailed 
comments and suggestions, which greatly improved the paper.

\end{document}